\documentclass{jams-l-m}

\usepackage{amssymb}

\usepackage{dsfont}

\usepackage{mathabx}

\usepackage{graphicx}

\usepackage{hyperref}

\usepackage{xcolor}

\usepackage{mathbbol}

\usepackage{array} 

\usepackage{float}

\usepackage{soul} 

\usepackage[autostyle]{csquotes} 

\newtheorem{theorem}{Theorem}[section]

\newtheorem{proposition}[theorem]{Proposition}

\theoremstyle{definition}
\newtheorem{definition}[theorem]{Definition}
\newtheorem{example}[theorem]{Example}

\theoremstyle{remark}
\newtheorem{remark}[theorem]{Remark}

\numberwithin{equation}{section}

\DeclareSymbolFontAlphabet{\amsmathbb}{AMSb}%

\usepackage{tikz}

\parindent 0cm
\parskip .2cm

\usetikzlibrary{arrows,positioning,calc} 
\usetikzlibrary{decorations.pathreplacing}
\tikzset{
    >=stealth',
    punkt/.style={
           rectangle,
           rounded corners,
           draw=black, thick,
           text width=5.5em,
           minimum height=2em,
           text centered},
    punktl/.style={
           rectangle,
           rounded corners,
           draw=black, thick,
           text width=7em,
           minimum height=2em,
           text centered},
    pil/.style={
           ->,
           shorten <=4pt,
       shorten >=4pt
    },
    pildotted/.style={
           ->,
           shorten <=4pt,
           shorten >=4pt,
  dotted,
  },
    punktf/.style={
           rectangle,
           text width=4.0em,
           minimum height=1.5em,
           text centered},
    punktfTop/.style={
           rectangle,
           text width=4.0em,
           minimum height=1.5em,
           text centered,
           append after command={
               [thick,shorten >=0.2bp, shorten <=0.2bp]
               (\tikzlastnode.north west)edge(\tikzlastnode.north east)
}
    },
    punktfBot/.style={
           rectangle,
           text width=4.0em,
           minimum height=1.5em,
           text centered,
           append after command={
               [thick,shorten >=0.2bp, shorten <=0.2bp]
               (\tikzlastnode.south west)edge(\tikzlastnode.south east)
            }
    }
}

\newcommand{\ti}[1]{\widetilde{#1}}

\usepackage{prodint}
\theoremstyle{plain}

\newtheorem{condition}{Condition}

\makeatletter
\newcommand*\bigcdot{\mathpalette\bigcdot@{.65}}
\newcommand*\bigcdot@[2]{\mathbin{\vcenter{\hbox{\scalebox{#2}{$\m@th#1\bullet$}}}}}
\makeatother

\usepackage[shortlabels]{enumitem} 

\usepackage{changepage}

\begin{document}

\title[Disability insurance with collective health claims]{Disability insurance with collective health claims: a mean-field approach}


\author{Christian Furrer}
\address{Department of Mathematical Sciences, University of Copenhagen, Universitetsparken 5, DK-2100 Copenhagen, Denmark}
\curraddr{}
\email{\href{mailto:furrer@math.ku.dk}{furrer@math.ku.dk}}
\thanks{}

\author{Philipp C.\ Hornung}
\address{Department of Mathematical Sciences, University of Copenhagen, Universitetsparken 5, DK-2100 Copenhagen, Denmark}
\curraddr{}
\email{\href{mailto:pcho@math.ku.dk}{pcho@math.ku.dk}}
\thanks{}


\date{}

\dedicatory{\today}

\begin{abstract}

The classic semi-Markov disability model is expanded with individual and collective health claims to improve its explanatory and predictive power -- in particular in the context of group experience rating.  The inclusion of collective health claims leads to a computationally challenging many-body problem. By adopting a mean-field approach, this many-body problem can be approximated by a non-linear one-body problem, which in turn leads to a transparent pricing method for disability coverages based on a lower-dimensional system of non-linear forward integro-differential equations. In a practice-oriented simulation study, the mean-field approximation clearly stands its ground in comparison to naïve Monte Carlo methods. \\

\noindent\textbf{Keywords:} Group experience rating; non-linear forward equations; semi-Markov model.

\end{abstract}

\maketitle



\section{Introduction}

Disability insurance plays a vital role in ensuring income stability and supporting part-time employment during periods with reduced earning capacity. In many countries, disability coverages are sold not only directly to individuals, but also as part of a company pension scheme. It is therefore essential for insurers to be able to accurately price such coverages -- taking into account the fact that the physical and psychological work environment of each company likely has a substantial effect on the frequency, but also the severity, of disability claims. This suggests the application of group experience rating to disability insurance, which has been explored via an empirical Bayes approach for Markov chains in~\cite{Furrer2019,FurrerSoerensenYslas2025}. However, since the frequency of disability claims is rather low and single claims with long durations tend to have a substantial impact on the total loss, such a direct approach to experience rating is greatly challenged. Simply put: It can be near impossible to distinguish between `good' and `bad' companies, since there is just too limited data available.

In this paper, we address the `small data' issue {of infrequent disability claims} by drawing on three observations. First, disability insurance and health insurance are increasingly, at least in certain countries such as Denmark, sold together. Second, it is reasonable to expect -- at least when adjusting for covariates -- that there is a relationship between an employees disability frequency and the extent of health claims across all employees; this observation is indirectly based on the assumption that both factors are largely attributable to the physical and psychological working environment.  For example, a toxic work culture would lead to increased mental health risk, which would show itself in the form of health claims (consulting a psychologist, etc.) and, later, in actual disability (loss of working capacity due to severe stress). Third and finally, the scope of health insurance data is much more extensive, as this type of insurance is often used on a regular basis. Based on these observations, we formulate a multi-state model for disability insurance with integrated information about health claims.

To be specific, we expand the classic semi-Markov disability model (see~\cite{Hoem1972,Helwich2008,Christiansen2012,BuchardtMollerSchmidt2015}) with {collective and individual health claims}. In the classic model, transition probabilities, expected cash flows, and prospective reserves may efficiently be calculated using Kolmogorov's integro-differential equations. However, if we denote the number of individuals by $n$, the inclusion of collective health claims entails that the computational complexity of the relevant forward equations grows exponentially in $n$. Consequently, we adopt the mean-field approach outlined in~\cite{Hornung2025} to obtain, as an approximation in the limit $n \to \infty$, a lower-dimensional system of non-linear forward integro-differential equations. Furthermore, we briefly address statistical aspects as well as practical implementation.

In general, the insurance liabilities of an individual may depend on the other individuals either through the payments or, as is the case here, because the individuals themselves are dependent. In the area of actuarial multi-state modeling, mean-field approximations have hitherto received the most interest in the former case, see in particular~\cite{DjehicheLoefdahl2016}. In light hereof, we believe this paper offers a fresh perspective on the application of mean-field theory to the actuarial field.

The remainder of the paper is organized as follows. In Section~\ref{sec:setup}, we expand the classic semi-Markov disability model with individual and collective health claims. Section~\ref{sec:mean_field} contains a description of the corresponding mean-field model, including the associated system of non-linear forward integro-differential equations, and proofs of the required convergences. The next two sections are more oriented towards practice. In Section~\ref{sec:practical}, we introduce and briefly discuss a meta-algorithm for solving the relevant system of differential equations, while Section~\ref{sec:sim_study} is devoted to a simulation study and the comparison of the mean-field approximation to naïve Monte Carlo methods. The paper concludes with Section~\ref{sec:conclude} in which we offer some extensions and avenues for future work.

\section{Disability model with health claims} \label{sec:setup}

This section is devoted to the expansion of the classic semi-Markov disability model with initially individual health claims and subsequently also collective health claims.

\subsection{Semi-Markov model}\label{sec:pre_mean-field}

Let $(\Omega,\mathcal{F},\amsmathbb{F},\amsmathbb{P})$ be a filtered probability spac{e} and write $\amsmathbb{F}=(\mathcal{F}_t)_{t\geq0}$. The multi-state approach to classic disability insurance considers a non-explosive jump process $Z$ on the finite state space $\mathcal{J} = \{1,2,3\}$ according to Figure~\ref{fig:disability_state_space}. The initial distribution of $Z$ is denoted $\pi$. We associate to $Z$ a multivariate counting process $N$ with components $N_{jk}$, $j \neq k$, given by
\begin{align*}
N_{jk}(t) = \#\{s \in (0,t] : Z_{s-} = j, Z_s = k\}.
\end{align*}
In this paper, we consider contractual payments prescribed by a payment process $B$ given by
\begin{align}\label{eq:sec2:B}
B(\mathrm{d}t)
=
\sum_j \mathds{1}_{\{Z_{t-} = j\}} b_j(t,U_{t-}) \mathrm{d}t + \sum_{j \neq k} b_{jk}(t,U_{t-}) N_{jk}(\mathrm{d}t), 
\end{align}
where $U$ is the duration process associated with $Z$ given by
\begin{align*}
U_t = t - \sup\{s \in [0,t] : Z_s \neq Z_t\} \text{ for } t > 0 \text{ and } U_0 = 0,
\end{align*}
while $(t,u) \mapsto b_j(t,u)$ and $(t,u) \mapsto b_{jk}(t,u)$ are measurable functions that are bounded on compacts and which describe sojourn payment rates and transition payments, respectively. 

\begin{figure}[ht!]
    \centering
    \scalebox{0.8}{
    \begin{tikzpicture}[node distance=8em, auto]
	\node[punkt] (i1) {Active};
        \node[anchor=north east, at=(i1.north east)]{$1$};
        \node[punkt, right=3cm of i1] (i2) {Disabled};
        \node[anchor=north east, at=(i2.north east)]{$2$};
        \node[, right=1.5cm of i1] (a) {};
        \node[punkt, below=1.5cm of a] (a2) {Dead};
        \node[anchor=north east, at=(a2.north east)]{$3$};
        \path (i1) edge [pil, bend right=15] 	node [below=0.15cm]	{} (i2);
        \path (i2) edge [pil, bend right=15] 	node [above=0.1cm] 	{} (i1);
        \path (i1) edge [pil] 						node [left=0.15cm]  	{} (a2);
        \path (i2) edge [pil] 						node [right=0.15cm] 	{} (a2);
    \end{tikzpicture}}
    \caption{State space $\mathcal{J}=\{1,2,3\}$ for classic disability insurance. The arrows represent the possible transitions.}
    \label{fig:disability_state_space}
\end{figure}
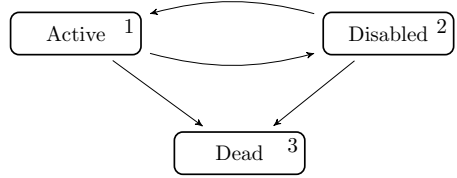

Semi-Markov modeling entails the assumption that $(Z,U)$ is Markov, and smooth semi-Markov modeling adds the assumption that the {distributions of the} jump times should be absolutely continuous (with respect to the Lebesgue measure). An alternative, but equivalent, formulation is in terms of the compensators of the multivariate counting process, which then should read
\begin{align*}
\amsmathbb{E}[N_{jk}(\mathrm{d}t) \, | \, \mathcal{F}_{t-}]  = \mathds{1}_{\{Z_{t-} = j\}} \mu_{jk}(t,U_{t-}) \mathrm{d}t.
\end{align*}
The functions $(t,u) \mapsto \mu_{jk}(t,u)$, $j \neq k$, are the so-called (duration-dependent) transition rates, which we assume to be measurable and bounded on compacts.

{Throughout, we assume implicitly that there exists a maximal contract time $T \in (0,\infty)$ after which all payments are zero. We may then define the state-wise prospective reserves at contract inception, $(V_i)_i$, according to}
\begin{align*}
V_i
=
\amsmathbb{E}\bigg[\int_0^T e^{-\int_0^t r(s) \, \mathrm{d}s} B(\mathrm{d}t) \, \bigg| \, Z_0 = i\bigg],
\end{align*}
where the deterministic interest rate $t \mapsto r(t)$ is a measurable function that is bounded on compacts. If we denote by $(A_i)_i$ the state-wise expected accumulated cash flows given by
\begin{align*}
A_i(\mathrm{d}t) = \amsmathbb{E}[B(\mathrm{d}t) \, | \, Z_0 = i],
\end{align*}
then it holds that
\begin{align*}
V_i = \int_0^T e^{-\int_0^t r(s) \, \mathrm{d}s} A_i(\mathrm{d}t).
\end{align*}
In similar fashion, the portfolio-wide prospective reserve at contract inception $V$ reads
\begin{align*}
V
=
\amsmathbb{E}\bigg[\int_0^T e^{-\int_0^t r(s) \, \mathrm{d}s} B(\mathrm{d}t) \bigg]
=
\int_0^T e^{-\int_0^t r(s) \, \mathrm{d}s} A(\mathrm{d}t),
\end{align*}
where the portfolio-wide expected accumulated cash flow $A$ is given by
\begin{align*}
A(\mathrm{d}t) = \amsmathbb{E}[B(\mathrm{d}t)].
\end{align*}
In the following, let
\begin{align*}
p_{ij}(t,u) := \amsmathbb{P}(Z_t = j, U_t \leq u \, | \, Z_0 = i) \text{ and } p_j(t,u) := \amsmathbb{P}(Z_t = j, U_t \leq u).
\end{align*}
be the transition probabilities and occupation probabilities, respectively. Note that
\begin{align*}
p_j = \sum_i \pi(i) p_{ij}, \quad A = \sum_i \pi(i) A_i, \quad V = \sum_i \pi(i) V_i.
\end{align*}

\begin{example}
Consider again a disability annuity with a waiting period of $\varepsilon\geq0$, corresponding to $b_2(t,u) = b \mathds{1}_{\{u \geq \varepsilon\}}$ with $b>0$, while all other payments are zero.
\begin{align*}
A_i(\mathrm{d}t) 
&=
\int_0^t b\mathds{1}_{\{u \geq \varepsilon\}}p_{i2}(t,\mathrm{d}u)\,\mathrm{d}t
=
b\mathds{1}_{\{t \geq \varepsilon\}} \big(p_{i2}(t,t) - p_{i2}(t,\varepsilon)\big)\mathrm{d}t, \\
A(\mathrm{d}t) 
&=
\int_0^t b\mathds{1}_{\{u \geq \varepsilon\}}p_2(t,\mathrm{d}u)\,\mathrm{d}t
=
b\mathds{1}_{\{t \geq \varepsilon\}} \big(p_2(t,t) - p_2(t,\varepsilon)\big)\mathrm{d}t.
\end{align*}
Consequently, 
\begin{align*}
V_i &= b \int_\varepsilon^T e^{-\int_0^t r(s) \, \mathrm{d}s} \big(p_{i2}(t,t) - p_{i2}(t,\varepsilon)\big)\mathrm{d}t, \\
V &= b \int_\varepsilon^T e^{-\int_0^t r(s) \, \mathrm{d}s} \big(p_2(t,t) - p_2(t,\varepsilon)\big)\mathrm{d}t.
\end{align*}
\end{example}
The following results are standard -- see for instance~\cite{BuchardtMollerSchmidt2015}.
\begin{proposition}\label{prop:semi_markov_cf}
It holds that
\begin{align*}
A_i(\mathrm{d}t) &= \sum_j \int_0^t \Big(b_j(t,u)+ \sum_{k : k \neq j} b_{jk}(t,u) \mu_{jk}(t,u) \Big)  p_{ij}(t,\mathrm{d}u) \, \mathrm{d}t, \\
A(\mathrm{d}t) &= \sum_j \int_0^t \Big(b_j(t,u) + \sum_{k : k \neq j} b_{jk}(t,u) \mu_{jk}(t,u)\Big)  p_j(t,\mathrm{d}u) \, \mathrm{d}t.
\end{align*}
\end{proposition}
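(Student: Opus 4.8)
The plan is to compute the conditional expectation $A_i(\mathrm{d}t) = \amsmathbb{E}[B(\mathrm{d}t) \mid Z_0 = i]$ term by term from the representation~\eqref{eq:sec2:B}, treating the sojourn payments and the transition payments separately and then summing.

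First I would handle the sojourn contribution $\sum_j \mathds{1}_{\{Z_{t-}=j\}} b_j(t,U_{t-})\,\mathrm{d}t$. Since smooth semi-Markov modeling makes the jump times absolutely continuous, for each fixed $t$ we have $Z_{t-} = Z_t$ and $U_{t-} = U_t$ almost surely, so that
\begin{align*}
\amsmathbb{E}\Big[\mathds{1}_{\{Z_{t-}=j\}} b_j(t,U_{t-}) \,\Big|\, Z_0 = i\Big] = \int_0^t b_j(t,u)\, p_{ij}(t,\mathrm{d}u),
\end{align*}
the right-hand side being integration of $u \mapsto b_j(t,u)$ against the conditional law of $(Z_t,U_t)$ restricted to $\{Z_t = j\}$, which is precisely what $p_{ij}(t,\cdot)$ encodes; boundedness of $b_j$ on compacts guarantees integrability over the finite horizon.

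Next I would treat the transition contribution. The key device is the decomposition $N_{jk}(\mathrm{d}t) = M_{jk}(\mathrm{d}t) + \mathds{1}_{\{Z_{t-}=j\}} \mu_{jk}(t,U_{t-})\,\mathrm{d}t$, where $M_{jk}$ is the compensated counting process, a martingale by the stated form of the compensator. Writing the transition payments as a stochastic integral against $N_{jk}$ and inserting this decomposition, the integrand $(t,\omega) \mapsto b_{jk}(t,U_{t-})$ is predictable and bounded on compacts, so the martingale part integrates to zero in expectation, leaving
\begin{align*}
\amsmathbb{E}\Big[b_{jk}(t,U_{t-}) N_{jk}(\mathrm{d}t) \,\Big|\, Z_0 = i\Big] = \int_0^t b_{jk}(t,u)\mu_{jk}(t,u)\, p_{ij}(t,\mathrm{d}u)\,\mathrm{d}t,
\end{align*}
again via the almost sure identification $Z_{t-} = Z_t$, $U_{t-} = U_t$ at fixed $t$. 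Summing the two contributions over $j$ and over $k \neq j$ gives the stated expression for $A_i$. The portfolio-wide cash flow then follows immediately: since $A = \sum_i \pi(i) A_i$ and $p_j = \sum_i \pi(i) p_{ij}$, the formula for $A$ is obtained by replacing $p_{ij}$ with $p_j$ throughout.

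I expect the main obstacle to be the rigorous justification that the martingale integral vanishes in expectation. This is not automatic: one must argue integrability of the integrand against $M_{jk}$, typically through a localization argument exploiting boundedness on compacts together with the non-explosiveness of $Z$ and the finite contract horizon $T$, so as to conclude $\amsmathbb{E}\big[\int_0^T b_{jk}(t,U_{t-})\,M_{jk}(\mathrm{d}t)\big] = 0$. Once this is secured, the remaining steps are bookkeeping, with Fubini's theorem invoked to interchange the time integral and the expectation.
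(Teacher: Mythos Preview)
Your proposal is correct and matches the approach the paper takes: the paper does not give a proof of this particular proposition (it is cited as standard from~\cite{BuchardtMollerSchmidt2015}), but its proof of the extended Proposition~\ref{prop:semi_markov_health_cf} proceeds in exactly the way you describe---replace the counting process increment by its compensator via the martingale property, then integrate against the push-forward measure of $(Z_t,U_t)$ given $Z_0=i$. Your handling of the second statement via $A=\sum_i \pi(i) A_i$ is also fine, though the paper remarks that it is ``proveable in an identical fashion'' directly.
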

\begin{proposition}\label{prop:semi_markov_forward}
Let $d\geq0$. It holds almost everywhere on $[d,\infty)$ that
\begin{align*}
\frac{\mathrm{d}}{\mathrm{d}t} p_{ij}(t,t-d)
=
\sum_{k : k \neq j} \int_0^t \mu_{kj}(t,u) p_{ik}(t,\mathrm{d}u)
-
\int_0^{t-d} \mu_{j\bigcdot}(t,u) p_{ij}(t,\mathrm{d}u)
\end{align*}
with boundary conditions $p_{ij}(t,0) = \mathds{1}_{\{t = 0\}}\mathds{1}_{\{i = j\}}$. It further holds almost everywhere on $[d,\infty)$ that
\begin{align*}
\frac{\mathrm{d}}{\mathrm{d}t} p_j(t,t-d)
=
\sum_{k : k \neq j} \int_0^t \mu_{kj}(t,u) p_k(t,\mathrm{d}u)
-
\int_0^{t-d} \mu_{j\bigcdot}(t,u) p_j(t,\mathrm{d}u)
\end{align*}
with boundary conditions $p_j(t,0) = \mathds{1}_{\{t=0\}}\pi(j)$.
\end{proposition}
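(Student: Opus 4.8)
The plan is to turn the statement into an equivalent integral (Volterra) equation for $t \mapsto p_{ij}(t,t-d)$ by a compensator argument, and then to read off the forward equation by differentiating an absolutely continuous function. The guiding observation is that the process
\begin{align*}
Y^d_j(t) := \mathds{1}_{\{Z_t = j, \, U_t \leq t-d\}}, \qquad t \geq d,
\end{align*}
whose conditional expectation equals $\amsmathbb{E}[Y^d_j(t) \, | \, Z_0 = i] = p_{ij}(t,t-d)$, is in fact a pure-jump process of finite variation whose only jumps occur at the jump times of $Z$. Indeed, between two consecutive jumps of $Z$ both $U_t$ and the threshold $t-d$ increase at unit rate, so $U_t - (t-d)$ stays constant and the indicator cannot switch; this is the key simplification that, in the present approach, replaces the transport term of the usual density-based derivation.

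First I would make the jump dynamics explicit. At a transition into $j$ occurring at a time $s > d$ the duration resets to $0 \leq s-d$, so $Y^d_j$ jumps from $0$ to $1$, whereas at a transition out of $j$ the indicator drops back to $0$ precisely when it was active; no other transition affects it. This yields, for $t \geq d$,
\begin{align*}
Y^d_j(t)
=
Y^d_j(d)
+
\int_d^t \sum_{k : k \neq j} \mathds{1}_{\{Z_{s-} = k\}} \, N_{kj}(\mathrm{d}s)
-
\int_d^t Y^d_j(s-) \sum_{k : k \neq j} N_{jk}(\mathrm{d}s),
\end{align*}
where $\amsmathbb{E}[Y^d_j(d) \, | \, Z_0 = i] = \mathds{1}_{\{d = 0\}}\mathds{1}_{\{i = j\}}$.

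Next I would take conditional expectations given $Z_0 = i$ and replace each counting-process integral by its compensator via $\amsmathbb{E}[N_{kj}(\mathrm{d}s) \, | \, \mathcal{F}_{s-}] = \mathds{1}_{\{Z_{s-} = k\}} \mu_{kj}(s,U_{s-}) \, \mathrm{d}s$. Since the jump times of $Z$ form an almost surely Lebesgue-null set, the law of $(Z_{s-},U_{s-})$ coincides with that of $(Z_s,U_s)$ for almost every $s$, so that $\amsmathbb{E}[\mathds{1}_{\{Z_{s-} = k\}}\mu_{kj}(s,U_{s-}) \, | \, Z_0 = i] = \int_0^s \mu_{kj}(s,u) p_{ik}(s,\mathrm{d}u)$ and, using $Y^d_j(s-)\mathds{1}_{\{Z_{s-} = j\}} = Y^d_j(s-)$ together with $\mu_{j\bigcdot} = \sum_{k : k \neq j}\mu_{jk}$, that $\amsmathbb{E}[Y^d_j(s-)\mu_{j\bigcdot}(s,U_{s-}) \, | \, Z_0 = i] = \int_0^{s-d}\mu_{j\bigcdot}(s,u) p_{ij}(s,\mathrm{d}u)$. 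An application of Fubini -- legitimate because the rates are bounded on compacts and each $p_{ik}(s,\cdot)$ is a sub-probability measure -- then delivers the Volterra equation
\begin{align*}
p_{ij}(t,t-d)
=
\mathds{1}_{\{d = 0\}}\mathds{1}_{\{i = j\}}
+
\int_d^t \Big( \sum_{k : k \neq j} \int_0^s \mu_{kj}(s,u) p_{ik}(s,\mathrm{d}u) - \int_0^{s-d} \mu_{j\bigcdot}(s,u) p_{ij}(s,\mathrm{d}u) \Big) \, \mathrm{d}s.
\end{align*}
The integrand is bounded on compacts, hence the right-hand side is locally absolutely continuous in $t$; differentiating recovers the stated equation almost everywhere on $[d,\infty)$, while the base value at $t = d$ gives $p_{ij}(d,0) = \mathds{1}_{\{d = 0\}}\mathds{1}_{\{i = j\}}$, i.e.\ the boundary condition $p_{ij}(t,0) = \mathds{1}_{\{t = 0\}}\mathds{1}_{\{i = j\}}$. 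The statement for $p_j$ follows verbatim upon integrating against the initial distribution $\pi$ rather than conditioning on $Z_0 = i$, which replaces the boundary term by $\mathds{1}_{\{t = 0\}}\pi(j)$.

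The step I expect to be the main obstacle is the rigorous justification of the jump representation of $Y^d_j$: one must argue that the moving threshold $t-d$ enters only through the constancy of $U_t - (t-d)$ between jumps, so that $Y^d_j$ has no continuous part, and that the merely measurable, compact-bounded rates still permit identifying the time-$s$ compensator integrals with the duration measures $p_{ik}(s,\mathrm{d}u)$ for almost every $s$. The remaining steps -- the compensator replacement, the use of Fubini, and the passage from the integral equation to the differential equation -- are routine under the stated boundedness assumptions.
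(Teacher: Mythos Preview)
Your argument is correct. The jump decomposition of $Y^d_j$ holds because $U_t-(t-d)$ is constant between jumps of $Z$, so the indicator has no continuous part; the compensator substitution and the Fubini step are legitimate under the local boundedness of the rates; and differentiation of the resulting integral equation gives the forward equation almost everywhere with the stated boundary value at $t=d$.

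The paper does not prove this proposition directly, citing it as standard. Its preferred device---spelled out in Remark~\ref{rmk:semi_markov_reparam} and used in the proof of the health-claims extension Proposition~\ref{prop:semi_markov_forward_health}---is instead a reparameterization: set $Y_t=t-U_t$ so that $(Z,Y)$ is a genuine Markov \emph{jump} process rather than a piecewise deterministic one, invoke the off-the-shelf Kolmogorov forward equation for jump processes (from~\cite{Jacobsen2006}), and then change variables from $(t,y)$ back to $(t,u)=(t,t-y)$. There the absence of a transport term is a consequence of $Y$ being piecewise constant between jumps; in your argument the same phenomenon surfaces as the constancy of $U_t-(t-d)$, which is of course the identical observation read in the other coordinate.

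Your route is more self-contained and yields the Volterra form first, which makes both the absolute continuity in $t$ and the boundary condition transparent without appeal to external results. The paper's route is more modular: once the reparameterization is in place, every extension in the paper (individual health claims, mean-field dependence) reduces to citing a forward equation for a suitably enlarged jump process, whereas your indicator decomposition would have to be rewritten for each enlarged state.
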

\begin{remark}
In light of the reparameterization of Remark~\ref{rmk:semi_markov_reparam} below, Proposition~\ref{prop:semi_markov_forward} states that the transition probabilities solve a version of Kolmogorov's forward equations. In fact, they \textit{uniquely} solve these equations, confer with~\cite{FeinbergMandavaShiryaev2014,FeinbergMandavaShiryaev2022}.
\end{remark}
\begin{remark}\label{rmk:occ_to_trans}
Note that the occupation probabilities may be calculated directly or by first calculating the transition probabilities and then taking a weighted average with respect to the initial distribution; the latter would of course be more time-consuming. In particular, the occupation probability $p_j$ corresponds to the transition probability $p_{ij}$ if $\pi(i) = 1$. This is due to the fact that they have the same evolution forward through time and this evolution does not depend on the initial distribution. We emphasize this at the present time because precisely this property is lost as we turn to mean-field approximations in Section~\ref{sec:mean_field}.
\end{remark}

The terms of the integro-differential equations for the transition probabilities admit an intuitive interpretation. The first term is the aggregate probability mass stemming from paths starting in $i$ at time zero, transitioning into state $k$ at some time $t-u\in[0,t)$, and staying in state $k$ until transitioning into state $j$ at exactly time $t$ (inflow). Similarly, the second term subtracts the probability mass stemming from paths starting in state $i$ at time zero, transitioning into $j$ no earlier than time $d$ (as the duration at time $t$ should be smaller than $t-d$), and staying there until transitioning out of state $j$ at exactly time $t$ (outflow).

The systems of integro-differential equations of Proposition~\ref{prop:semi_markov_forward} can be solved using, for instance, the algorithm presented in Section~3 of~\cite{BuchardtMollerSchmidt2015}. Numerical schemes and practical implementation is discussed in greater detail in the latter Section~\ref{sec:practical}.

\begin{remark}\label{rmk:semi_markov_reparam}
The semi-Markov model can be reparameterized, using the process $Y$ given by $Y_t = t-U_t$, $t\geq0$, instead of $U$. Since $U$ is the duration \textit{since} the last jump, $Y$ is the time \textit{of} the last jump. Therefore, while $(Z,U)$ is a piecewise deterministic process, $(Z,Y)$ is actually a jump process. This proves mathematically convenient.

In fact, the model $(Z,U)$ with transition rates $\mu_{jk}$, $j \neq k$, and transition probabilities $p_{ij}$ is equivalent to the model $(Z,Y)$ with transition rates $(t,y) \mapsto \ti{\mu}_{jk}(t,y):=\mu_{jk}(t,t-y)$, $j \neq k$, and transition probabilities $\tilde{p}_{ij}$ given by
\begin{align*}
\ti{p}_{ij}(t,d) =\amsmathbb{P}(Z_t = j, Y_t \geq d \, | \, Z_0 = i ,Y_0=0), \quad 0 \leq d \leq t.
\end{align*}
To see this, we may use the definition of $Y$ to obtain
\begin{align*}
\ti{p}_{ij}(t,d)
&=
\amsmathbb{P}(Z_t = j, Y_t \geq d \, | \, Z_0 = i, Y_0=0) \\
&=
\amsmathbb{P}(Z_t = j, U_t \leq t-d \, | \, Z_0 = i, U_0=0) \\
&=
p_{ij}(t,t-d)
\end{align*}
for $0 \leq d \leq t$. Applying a change of variables, we arrive at a version of Kolmogorov's forward equations for $(Z,Y)$:
\begin{align*}
\frac{\mathrm{d}}{\mathrm{d}t} \ti{p}_{ij}(t,d)
&=
\sum_{k : k \neq j} \int_0^t \ti{\mu}_{kj}(t,y) \ti{p}_{ik}(t,\mathrm{d}y)
-
\int_d^{t} \ti{\mu}_{j\bigcdot}(t,y) \ti{p}_{ij}(t,\mathrm{d}y), \quad 0 \leq d \leq t, \\
\ti{p}_{ij}(0,0) &= \mathds{1}_{\{i = j\}}.
\end{align*}
\end{remark}

\subsection{Extension: Individual health claims}

It is reasonable to assume that the past number of health insurance claims of an individual is informative about the likelihood of a disability claim. Frequent use of the health insurance policy might predate a disability claim and, similarly, limited or no use might indicate good overall health. While disability claims are rare, health claims are frequent; this makes it particularly attractive to utilize the latter in risk profiling the individual. Therefore, we add to the model a {non-explosive} counting process $H$ describing the number of health insurance claims of the individual. Foremost for notational convenience, we make the simplifying assumption that $H$ and $N$ admit no simultaneous jumps. Further, we assume that 
\begin{align*}
\amsmathbb{E}[N_{jk}(\mathrm{d}t) \, | \, \mathcal{F}_{t-}]
&=
\mathds{1}_{\{Z_{t-} = j\}} \mu_{jk}(t,U_{t-},H_{t-}) \mathrm{d}t, \\
\amsmathbb{E}[H(\mathrm{d}t) \, | \, \mathcal{F}_{t-}] 
&=
\lambda_{Z_{t-}}(t,U_{t-},H_{t-}) \mathrm{d}t.
\end{align*}
The functions $(t,u,h) \mapsto \mu_{jk}(t,u,h)$, $j \neq k$, are duration- and health-dependent transition rates, which we assume to be measurable and bounded on compacts. The functions $(t,u,h) \mapsto \lambda_j(t,u,h)$ are health claim hazards of similar nature, which we also assume to be measurable and bounded on compacts. It follows that $(Z,U,H)$ is a Markov process{, confer for instance with Theorem~7.3.1(a) in~\cite{Jacobsen2006}.}

\begin{example}\label{ex:health_claims}
In modeling the health claims, a simple choice would be
\begin{align*}
\amsmathbb{E}[H(\mathrm{d}t) \, | \, \mathcal{F}_{t-}]  = \lambda_{Z_{t-}} \, \mathrm{d}t,
\end{align*}
for non-negative constants $\lambda_j$. In this case, $H$ is a doubly stochastic Poisson process with hazards depending only on the state (active, disabled, and dead) of the individual.
\end{example}

In the following, let
\begin{align*}
p_{ij}(t,u,h) :=&\, \amsmathbb{P}(Z_t = j, U_t \leq u, H_t=h \, | \, Z_0 = i), \\
p_j(t,u,h) :=&\, \amsmathbb{P}(Z_t = j, U_t \leq u,H_t=h)
\end{align*}
be the transition probabilities and occupation probabilities, respectively. Note that $p_j = \sum_i \pi(i) p_{ij}$.

\begin{example}\label{ex:waiting_period_continued}
Consider again a disability annuity with a waiting period of $\varepsilon\geq0$, corresponding to $b_2(t,u) = b \mathds{1}_{\{u \geq \varepsilon\}}$ with $b>0$, while all other payments are zero. In that case, we now have that 
\begin{align*}
A_i(\mathrm{d}t) 
&=
\sum_{h=0}^\infty \int_0^t b\mathds{1}_{\{u \geq \varepsilon\}}p_{i2}(t,\mathrm{d}u,h)\,\mathrm{d}t
=
b\mathds{1}_{\{t \geq \varepsilon\}} \sum_{h=0}^\infty\big(p_{i2}(t,t,h) - p_{i2}(t,\varepsilon,h)\big)\mathrm{d}t, \\
A(\mathrm{d}t) 
&=
\sum_{h=0}^\infty \int_0^t b\mathds{1}_{\{u \geq \varepsilon\}}p_2(t,\mathrm{d}u,h)\,\mathrm{d}t
=
b\mathds{1}_{\{t \geq \varepsilon\}} \sum_{h=0}^\infty\big(p_2(t,t,h) - p_2(t,\varepsilon,h)\big)\mathrm{d}t.
\end{align*}
Consequently,
\begin{align*}
V_i &= b\int_\varepsilon^T e^{-\int_0^t r(s) \, \mathrm{d}s} \sum_{h=0}^\infty\big(p_{i2}(t,t,h) - p_{i2}(t,\varepsilon,h)\big)\mathrm{d}t, \\
V &= b\int_\varepsilon^T e^{-\int_0^t r(s) \, \mathrm{d}s}\sum_{h=0}^\infty \big(p_2(t,t,h) - p_2(t,\varepsilon,h)\big)\mathrm{d}t.
\end{align*}
Since the payments themselves do not depend on $H$, this variable is simply summated out. 
\end{example}

The following two propositions expand Propositions~\ref{prop:semi_markov_cf} and~\ref{prop:semi_markov_forward} to also include individual health claims.

\begin{proposition}\label{prop:semi_markov_health_cf}
It holds that
\begin{align*}
A_i(\mathrm{d}t)
&=
\sum_j\sum_{h=0}^{\infty}\int_0^t \Big(b_j(t,u)+ \sum_{k : k \neq j} b_{jk}(t,u) \mu_{jk}(t,u,h) \Big)  p_{ij}(t,\mathrm{d}u,h) \, \mathrm{d}t, \\
A(\mathrm{d}t)
&=
\sum_j \sum_{h=0}^{\infty} \int_0^t \Big(b_j(t,u) + \sum_{k : k \neq j} b_{jk}(t,u) \mu_{jk}(t,u,h)\Big)  p_j(t,\mathrm{d}u,h) \, \mathrm{d}t.
\end{align*}
\end{proposition}
\begin{proof}
Referring to the martingale property of differences between the counting processes and their (absolutely continuous) compensators, we find that
\begin{align*}
A_i(\mathrm{d}t)
=
\amsmathbb{E}\Big[ b_{Z_t}(t,U_t) + \sum_{k : k \neq Z_t} b_{Z_t k}(t,U_t)  \mu_{Z_t k}(t,U_t,H_t) \, \Big| \, Z_0 = i\Big] \mathrm{d}t. 
\end{align*}
Invoking the relevant push-forward measure, the first statement of the proposition is immediate. The second statement is proveable in an identical fashion.
\end{proof}

In the following, we adopt the convention that $(t,u) \mapsto \lambda_j(t,u,-1)$ is constantly zero.

\begin{proposition}\label{prop:semi_markov_forward_health}
 Let $d\geq0$. It holds almost everywhere on $[d,\infty)$ that
\begin{align*}
&\frac{\mathrm{d}}{\mathrm{d}t} p_{ij}(t,t-d,h) \\
&=
\sum_{k : k \neq j} \int_0^t \mu_{kj}(t,u,h) p_{ik}(t,\mathrm{d}u,h)-\int_0^{t-d} \mu_{j\bigcdot}(t,u,h) p_{ij}(t,\mathrm{d}u,h) \\
&\quad+
\int_{0}^{t-d}\lambda_j(t,u,h-1)p_{ij}(t,\mathrm{d}u,h-1)-\int_0^{t-d}\lambda_j(t,u,h) p_{ij}(t,\mathrm{d}u,h) 
\end{align*}
with boundary conditions $p_{ij}(t,0,h) = \mathds{1}_{\{t=0\}}\mathds{1}_{\{h=0\}}\mathds{1}_{\{i = j\}}$. It further holds almost everywhere on $[d,\infty)$ that
\begin{align*}
&\frac{\mathrm{d}}{\mathrm{d}t} p_{j}(t,t-d,h) \\
&=
\sum_{k : k \neq j} \int_0^t \mu_{kj}(t,u,h) p_{k}(t,\mathrm{d}u,h)-\int_0^{t-d} \mu_{j\bigcdot}(t,u,h) p_{j}(t,\mathrm{d}u,h)\\
&\quad+
\int_{0}^{t-d}\lambda_j(t,u,h-1)p_{j}(t,\mathrm{d}u,h-1)-\int_0^{t-d}\lambda_j(t,u,h) p_{j}(t,\mathrm{d}u,h) 
\end{align*}
with boundary conditions $p_{j}(t,0,h) = \mathds{1}_{\{t=0\}}\mathds{1}_{\{h=0\}}\pi(j)$.
\end{proposition}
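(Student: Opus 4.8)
The plan is to derive the forward equation from an infinitesimal balance of probability flow for the Markov process $(Z,U,H)$, differentiating along the characteristic direction $u = t-d$ so as to absorb the deterministic unit-speed growth of the duration. Fixing $d \geq 0$ and a small $\Delta > 0$, I would compare $p_{ij}(t+\Delta, (t+\Delta)-d, h)$ with $p_{ij}(t, t-d, h)$ by conditioning on $\mathcal{F}_t$ and using the Markov property of $(Z,U,H)$ together with the prescribed compensators. The key observation is that, absent any jump in $(t, t+\Delta]$, the event $\{Z_t = j, U_t \leq t-d, H_t = h\}$ is carried exactly onto $\{Z_{t+\Delta} = j, U_{t+\Delta} \leq (t+\Delta)-d, H_{t+\Delta} = h\}$ because $U$ grows at unit rate; this is precisely why following $u = t-d$ removes the transport term and why the bound $u \leq t-d$ on the outflow integrals is stable to first order.

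Next I would classify, up to $o(\Delta)$, the ways the target event can be entered or left during $(t, t+\Delta]$. There are four first-order contributions: (i) a $Z$-jump $k \to j$ from any duration $u \in [0,t]$, which resets the duration to $0$ and hence always satisfies $U_{t+\Delta} \leq (t+\Delta)-d$ for $t \geq d$, producing the inflow $\sum_{k : k \neq j}\int_0^t \mu_{kj}(t,u,h)\,p_{ik}(t,\mathrm{d}u,h)$; (ii) a $Z$-jump out of $j$ from a duration $u \leq t-d$, giving the outflow $\int_0^{t-d}\mu_{j\bigcdot}(t,u,h)\,p_{ij}(t,\mathrm{d}u,h)$; (iii) a health claim taking $H$ from $h-1$ to $h$, which leaves $U$ unchanged and therefore retains the bound $u \leq t-d$, producing the inflow $\int_0^{t-d}\lambda_j(t,u,h-1)\,p_{ij}(t,\mathrm{d}u,h-1)$; and (iv) a health claim taking $H$ from $h$ to $h+1$, giving the outflow $\int_0^{t-d}\lambda_j(t,u,h)\,p_{ij}(t,\mathrm{d}u,h)$. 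That $H$ and $N$ admit no simultaneous jumps is what keeps (i) separate from (iii)--(iv) and makes the health transitions act as a pure birth mechanism on the $h$-index. Dividing by $\Delta$ and letting $\Delta \downarrow 0$ yields the asserted equation, while the boundary conditions follow from $U_0 = 0$, $H_0 = 0$, $Z_0 = i$, together with the fact that for $t > 0$ the absolute continuity of the jump times forces $\amsmathbb{P}(U_t = 0) = 0$.

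The main obstacle is the rigorous passage to the limit: one must show that $t \mapsto p_{ij}(t, t-d, h)$ is absolutely continuous on $[d,\infty)$ with the stated derivative almost everywhere, and that the $o(\Delta)$ remainder (multiple jumps in $(t,t+\Delta]$ and the variation of the rates thereon) is uniformly negligible. To do this cleanly I would pass to the reparameterized coordinates $(Z,Y,H)$ with $Y_t = t-U_t$ from Remark~\ref{rmk:semi_markov_reparam}, in which the process is piecewise constant and $\ti{p}_{ij}(t,d,h) = p_{ij}(t,t-d,h)$; the balance argument then becomes the genuine Kolmogorov forward equation of a pure jump process, and the change of variables $u = t-y$ returns the displayed form. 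Boundedness of $\mu_{jk}$ and $\lambda_j$ on compacts supplies the domination needed to interchange limit and integration and to bound the remainder. Care must be taken with the duration measure $p_{ij}(t,\mathrm{d}u,h)$, which for $i = j$, $h = 0$ carries an atom at $u = t$ (the never-jumped path); choosing $d > 0$ keeps the evaluation point $u = t-d$ away from this atom, consistent with the almost-everywhere statement.

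Finally, since every term on the right-hand side is linear in the transition probabilities and the rates depend neither on the initial state $i$ nor on the probabilities themselves, the occupation-probability version follows at once by forming $p_j = \sum_i \pi(i) p_{ij}$: this weighted sum solves the same linear system, now with boundary value $\sum_i \pi(i)\mathds{1}_{\{i=j\}} = \pi(j)$, which is exactly the claimed condition.
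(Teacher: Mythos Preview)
Your proposal is correct and follows essentially the same route as the paper: both reduce to the reparameterized pure jump process $(Z,Y,H)$ with $Y_t = t - U_t$, obtain Kolmogorov's forward equation for that process, and change variables back via $u = t - y$; the occupation-probability version is then deduced by linearity from $p_j = \sum_i \pi(i) p_{ij}$. The paper is simply more terse, citing Jacobsen (2006) directly for the forward equation of the jump process rather than spelling out the infinitesimal balance heuristic you give in your first two paragraphs.
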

\begin{proof}
The second statement of the proposition follows, for instance, from the first statement and the identity $p_j = \sum_i \pi(i) p_{ij}$.  To prove the first statement, we consider the trivariate process $(Z,Y,H)$ with $Y$ defined according to $Y_t = t-U_t$, $t\geq0$; this is a a Markov jump process. By definition, the compensators of the counting processes associated with $(Z,Y,H)$ are predictable w.r.t.\ the information generated by themselves, and thus by Theorem 4.8.1 of~\cite{Jacobsen2006}, the distribution of $(Z,Y,H)$ is fully characterized by these compensators. Consequently, we may invoke~(7.17) on p.~151 in~\cite{Jacobsen2006} to obtain a system of integro-differential equations which, after a change of variable similar to Remark~\ref{rmk:semi_markov_reparam}, yields the desired result.
\end{proof}

\begin{remark}
In light of the reparameterization utilized in the proof, Proposition~\ref{prop:semi_markov_forward_health} states that the transition probabilities solve a version of Kolmogorov's forward equations. In fact, they \textit{uniquely} solve these equations, confer with~\cite{FeinbergMandavaShiryaev2014,FeinbergMandavaShiryaev2022}.
\end{remark}

\begin{remark}
Note that the occupation probabilities may be calculated directly or by first calculating the transition probabilities and then taking a weighted average with respect to the initial distribution. In particular, the occupation probability $p_j$ corresponds to the transition probability $p_{ij}$ if $\pi(i)=1$. This is due to the fact that they have the same evolution forward through time and this evolution does not depend on the initial distribution. We emphasize this once again because precisely this property is lost as we turn to mean-field approximations in Section~\ref{sec:mean_field}.
\end{remark}

The addition of individual health claims has not hurt the intuitive interpretation of the terms of the integro-differential equations. The first line is concerned with jumps of $Z$, while the second line is concerned with jumps of $H$; this split is possible since simultaneous jumps were disallowed -- otherwise, two additional terms would be necessary. The two terms of the first line are similar to before, but with an added requirement of $H$ having reached state $h$ strictly before time $t$. The two terms of the second line are new. The first term adds the aggregate probability mass stemming from paths for which the $h$'th health claim occurs at time $t$ (inflow), while the second term subtracts the aggregate probability mass stemming from path for which the $(h+1)$'th health claim occurs at time $t$ (outflow).

The systems of integro-differential equations of Proposition~\ref{prop:semi_markov_forward} can, for instance, be solved by adapting the algorithm presented in Section~3 of ~\cite{BuchardtMollerSchmidt2015}. Compared to the situation without individual health claims, the procedure would have to be used for every triplet $(i,j,h)$ rather than only every pair $(i,j)$. In practice, one must choose a cut-off $K_H$, calculate the transition probabilities for $h=0,\ldots,K_H$, and perhaps extrapolate for $h>K_H$. Numerical schemes and practical implementation is discussed in greater detail in the latter Section~\ref{sec:practical}.

\subsection{Extension: Collective health claims}\label{subsec:collective}
So far, we have only considered a single individual -- implicitly assuming individuals to be independent. However, one could easily imagine this to not be the case. There is of course the obvious example of events and trends that impact society as a whole, such as pandemics and climate change, but also technological and medical advancements. However, as explained in the introduction, we rather have the example of company level insurance plans and collective health claims in mind. Dependence between individuals here stems from the fact that each employee is affected by the same physical and psychological work environment. If the insurance plan contains both disability and health coverage, aggregate information about health claims, which are much more extensive than disability claims, could serve as a reasonable predictor of, for example, the disability rate.

To formalize a model able to capture the stylized facts, we consider $n\in\amsmathbb{N}$ individuals, each with an associated trivariate process $X^{\ell,n} = (Z^{\ell,n}, U^{\ell,n}, H^{\ell,n})$. Foremost for notational convenience, we make the additional simplifying assumption that $H^{1,n},N^{1,n},\ldots,H^{n,n},N^{n,n}$ admit no simultaneous jumps. We denote by $E$ the state space of each $X^{\ell,n}$, that is $E = \mathcal{J} \times [0,\infty) \times \amsmathbb{N}_0$, and consider a measurable {and bounded function $g : E \mapsto \amsmathbb{R}^d$, $d \in \amsmathbb{N}$.} Based hereon, we define the averaged process $\nu^n$ according to
\begin{align*}
\nu_t^n = \frac{1}{n} \sum_{\ell = 1}^n g(X_t^{\ell,n}),
\end{align*}
and we assume that
\begin{equation}\label{eq:interacting_compensators}
\begin{aligned}
\amsmathbb{E}[N_{jk}^{\ell,n}(\mathrm{d}t) \, | \, \mathcal{F}_{t-}]
&=
\mathds{1}_{\{Z_{t-}^{\ell,n} = j\}} \mu_{jk}(t,U_{t-}^{\ell,n},H_{t-}^{\ell,n},\nu_{t-}^n) \mathrm{d}t, \\
\amsmathbb{E}[H^{\ell,n}(\mathrm{d}t) \, | \, \mathcal{F}_{t-}] 
&=
\lambda_{Z_{t-}^{\ell,n}}(t,U_{t-}^{\ell,n},H_{t-}^{\ell,n},\nu_{t-}^n) \mathrm{d}t.
\end{aligned}
\end{equation}
The functions $(t,u,h,y) \mapsto \mu_{jk}(t,u,h,y)$, $j \neq k$, are duration-, health- as well as collective-dependent transition rates, which we assume to be measurable and bounded on compacts. The functions $(t,u,h,y) \mapsto \lambda_j(t,u,h,y)$ are health claim hazards of similar nature, which we also assume to be measurable and bounded on compacts. 

{
\begin{example}\label{ex:simple_g}
As a concrete example of how the function $g$ might be chosen in practice, consider $g(z,u,h) = \max\{h,\kappa_H\}$ for some $\kappa_H>0$. In this case,
\begin{align*}
\nu_t^n = \frac{1}{n} \sum_{\ell = 1}^n \max\{H_t^{\ell,n},\kappa_H\}.
\end{align*}
Consequently, the transition rate from the active to the disabled state may now depend on the average truncated number of health insurance claims. In practice, the impact of truncation can be removed by selecting $\kappa_H$ sufficiently large.
\end{example}
}

If the individuals have the same initial distribution, meaning $\pi^{\ell,n}$ does not depend on $\ell$, then~\eqref{eq:interacting_compensators} entails that the individuals are actually identically distributed. However, they are in general {not} independent and the Markov property holds neither for $X^{\ell,n}$ or $\nu^n$ nor $(X^{\ell,n}, \nu^n)$; only the high-dimensional process $X^n:=(X^{1,n},\ldots,X^{n,n})$ is always Markov. If we may think of $\nu^n$ as an external process driving $Z^{1,n}$, which in the context of Example~\ref{ex:simple_g} would be the case if $H^{\ell,n}$ is \textit{locally independent} of $Z^{\ell,n}$, confer with~\cite{Aalen1987}, computational simplifications could surface. It is the (causal) interaction between disabilities and health claims on a collective level that especially complicate the computational aspects. 

The fact that $X^n$ is a Markov process signifies, at least in principle, that transition probabilities and other quantities of interest may be computed by solving (high-dimensional) systems of integro-differential equations. However, even if we disregard health claims and duration effects, the computational complexity grows exponentially in $n$. Thus, already for moderate $n$ {such a direct approach likely} has to be abandoned. Monte Carlo methods constitute an alternative, but are rather slow and should, therefore, only be used as a kind of last resort.

In the next section, we explore an attractive alternative: mean-field approximations, where the process $\nu^n$ is replaced by its mean, restoring the independence between individuals as well as the computational tractability.

\section{Mean-field approximation}\label{sec:mean_field}

The idea behind mean-field approximations is to replace averages by mean values -- noting that as $n\to\infty$, the average of $n$ exchangeable random variables convergens to their mean. Specifically, we want to replace the averaged process $\nu^n$ in~\eqref{eq:interacting_compensators} by its mean. The advantage of such an approximation would be the replacement of the high-dimensional $n$-individual model by a one-individual limiting model, the so-called mean-field model.

\subsection{Setup and mean-field convergence}\label{sec:mean-field_setup}

The mean-field model corresponding to the $n$-individual model of Subsection~\ref{subsec:collective} is given by th{e t}rivariate process $\bar{X}=(\bar{Z},\bar{U},\bar{H})$, where $\bar{U}$ is the duration process associated with $\bar{Z}$ and $\bar{H}$ is a counting process, with the following characteristics. First, and foremost for notational convenience, we assume that $\bar{H}$ and $\bar{N}$, where $\bar{N}$ is the multivariate counting process associated with $\bar{Z}$, admit no simultaneous jumps. Further, we assume that 
\begin{equation}\label{eq:mean_field_compensators}
    \begin{aligned}
        \amsmathbb{E}[\bar{N}_{jk}(\mathrm{d}t) \, | \, \mathcal{F}_{t-}]  &= \mathds{1}_{\{\bar{Z}_{t-} = j\}} \mu_{jk}\big(t,\bar{U}_{t-},\bar{H}_{t-},v(t-)\big) \mathrm{d}t,\\
        \amsmathbb{E}[\bar{H}(\mathrm{d}t) \, | \, \mathcal{F}_{t-}]  &= \lambda_{\bar{Z}_{t-}}\big(t,\bar{U}_{t-},\bar{H}_{t-},v(t-)\big) \mathrm{d}t,
    \end{aligned}  
\end{equation}
where $t \mapsto v(t) := \amsmathbb{E}[g(\bar{X}_t)]$. It is apparent that the empirical average $\nu^n$ has been replaced by its expectation $v$. Instead of depending on other individuals, th{e c}ompensators of an individual now depend on the distribution of the process $\bar{X}$ through the mean $v$. {In other words, $\bar{X}=(\bar{Z},\bar{U},\bar{H})$ is a distribution-dependent trivariate process.} This has certain mathematical implications, some of which are beneficial and some of which are not, as we shall soon demonstrate. The type of convergence that lies behind a mean-field approximation relates to the notion of chaosticity{. In the mean-field literature, one usually employs a notion of chaosticity based on weak convergence. However, as discussed in Section~7 of~\cite{Hornung2025}, it can be preferable to use a stronger notion based on the total variation distance. In this manuscript, we exclusively use this stronger notion of chaosticity.

\begin{definition}\label{def:chaosticity}
Let $(S,\mathcal{S})$ be a measurable space, and let $\amsmathbb{Q}$ be a probability measure on $(S,\mathcal{S})$. Then a sequence of exchangeable probability measures $(\amsmathbb{Q}_n)_{n\in\amsmathbb{N}}$, with each $\amsmathbb{Q}_n$ defined on the corresponding product space $(S^n, \otimes_{\ell=1}^n \mathcal{S})$, is said to be \emph{$\amsmathbb{Q}$-chaotic} if
\begin{align*}
\forall k \in \amsmathbb{N} : \quad \amsmathbb{Q}_n^k \overset{\text{TV}}{\to} \otimes_{\ell = 1}^k \amsmathbb{Q},
\end{align*}
where $\amsmathbb{Q}_n^k(\,\cdot\,) := \amsmathbb{Q}_n(\, \cdot \times S^{n-k})$ for $k < n$. 
\end{definition}

\begin{remark}
By using the superscript $\text{TV}$, we signify convergence in the total variation distance. As a gentle reminder to the reader, we may define the total variation distance $d_{\text{TV}}$ on the set of probability measure $\mathcal{P}(S)$ of a measurable space $(S,\mathcal{S})$ according to
\begin{align*}
d_{\text{TV}}(\amsmathbb{Q}_1,\amsmathbb{Q}_2)
:=
\sup_{A\in\mathcal{S}}|\amsmathbb{Q}_1(A)-\amsmathbb{Q}_2(A)|,\quad \amsmathbb{Q}_1,\amsmathbb{Q}_2\in\mathcal{P}(S).
\end{align*}
An important technical observation is that $(\mathcal{P}(S),d_{\text{TV}})$ is complete.
\end{remark}
}

\begin{remark}
As {another} gentle reminder to the reader, if $\amsmathbb{Q}_n$ describes the distribution of random variables $(Y^{1,n},\ldots,Y^{n,n})$, then $\amsmathbb{Q}_n$ (or the random variables themselves) is said to be \textit{exchangeable} if
\begin{align*}
(Y^{1,n},\ldots,Y^{1,n}) \overset{\text{d}}{=} (Y^{\sigma(1),n},\ldots,Y^{\sigma(n),n})
\end{align*}
for all permutations $\sigma : \{1,\ldots,n\} \mapsto \{1,\ldots,n\}$. Intuitively, the joint distribution of the individuals is unchanged when the individuals are reordered and, consequently, all individuals must share the same marginal distribution. Note that independent and identically distributed random variables are automatically exchangeable.
\end{remark} 

Chaosticity, as Definition~\ref{def:chaosticity} reveals, entails that the individuals become asymptotically independent in the following manner: any fixed number of individuals become independent as the overall number of individuals goes to infinity. 

{In the following, we consider the processes $\bar{X}, X^{1,n}$, \ldots, $X^{n,n}$ as random variables taking values in the path space of non-explosive jump processes. This space can be equipped with the $\sigma$-algebra generated by the coordinate projections, and this $\sigma$-algebra corresponds to the Borel $\sigma$-algebra generated by the $J_1$-subspace topology.} We want to show that the sequence $(\amsmathbb{Q}_n)_{n\in\amsmathbb{N}}$ is $\bar{\amsmathbb{Q}}$-chaotic; this is what is understood by `mean-field convergence'.

So far, we have refrained from restricting the initial distribution of $X^n$ and $\bar{X}$. By definition, $U^{\ell,n}_0 = H^{\ell,n}_0 = \bar{U}_0 = \bar{H}_0 = 0$. The minimal condition would thus be to assume $\pi$-chaosticity of $(Z^{1,n}_0,\ldots,Z^{n,n}_0)(\amsmathbb{P})$ for some $\pi$, which is the assumption we adopt in the remainder of this section. Note that if the individuals are independent and identically distributed at time zero, then this minimal condition is automatically satisfied with $\pi = \pi^{1,1}$.

The following conditions on the transition rates, the health claim hazards, and the averaging function $g$ are sufficient to ensure the desired mean-field convergence. For just the existence and uniqueness of the mean-field model, less may do{.

\begin{condition}\label{cond:transition_rates_g}  
\begin{enumerate}[a)]
\item[]
\item The transition rates $(t,u,h,y) \mapsto \mu_{jk}(t,u,h,y)$, $j \neq k$, and the hazards $(t,u,h,y) \mapsto \lambda_j(t,u,h,y)$, in addition to being bounded on compacts, are bounded in $h$ and satisfy the following Lipschitz conditions: There exist non-negative constants $C_{jk}$ and $C_j$ such that, for all $y_1,y_2 \in \amsmathbb{R}^d$, all $u\in[0,T]$, and all $h \in \amsmathbb{N}_0$, it holds that
\begin{align*}
\lvert \mu_{jk}(t,u,h,y_1) - \mu_{jk}(t,u,h,y_2) \rvert &\leq C_{jk}\lVert y_1 - y_2 \rVert, \\
\lvert \lambda_j(t,u,h,y_1) - \lambda_j(t,u,h,y_2) \rvert &\leq C_j\lVert y_1 - y_2 \rVert.
\end{align*}
\item The averaging function $(z,u,h) \mapsto g(z,u,h)$, in addition to being bounded, satisfies the following Lipschitz condition: There exists a non-negative constant $C_g$ such that, for all $u_1,u_2\in[0,T]$, all $z \in \mathcal{J}$, and all $h\in\amsmathbb{N}_0$, it holds that
\begin{align*}
\lVert g(z,u_1,h) - g(z,u_2,h) \rVert \leq C_g \lvert u_1 - u_2 \rvert.
\end{align*}
\end{enumerate}
\end{condition}
}

{
\begin{remark}
Let $\mathcal{P}(E)$ denote the set of all probability measures on $E$ equipped with the Borel $\sigma$-algebra, and let $d_{\text{BL}}$ denote the so-called bounded-Lipschitz distance on $\mathcal{P}(E)$. Condition~\ref{cond:transition_rates_g} is sufficient to ensure that $(t,u,h,\rho) \mapsto \mu_{jk}(t,u,h,\int_E g \, \mathrm{d}\rho)$, $j \neq k$, and $(t,u,h,\rho) \mapsto \lambda_j(t,u,h,\int_E g \, \mathrm{d}\rho)$ are bounded and satisfy the following Lipschitz conditions: There exist non-negative constants $C_{jk}$ and $C_j$ such that, for all $\rho_1,\rho_2 \in \mathcal{P}(E)$, all $u\in [0,T]$, and all $h \in \amsmathbb{N}_0$, it holds that  
\begin{align*}
\bigg\lvert \mu_{jk}\Big(t,u,h,\int_E g \, \mathrm{d}\rho_1\Big) - \mu_{jk}\Big(t,u,h,\int_E g \, \mathrm{d}\rho_2\Big) \bigg\rvert &\leq C_{jk}d_{\text{BL}}(\rho_1,\rho_2), \\
\bigg\lvert \lambda_j\Big(t,u,h,\int_E g \, \mathrm{d}\rho_1\Big) - \lambda_j\Big(t,u,h,\int_E g \, \mathrm{d}\rho_2\Big) \bigg\rvert &\leq C_j d_{\text{BL}}(\rho_1,\rho_2).
\end{align*}
These are the Lipschitz conditions that are actually exploited to verify the existence and uniqueness of the mean-field model and to establish the desired mean-field convergence.
\end{remark}
}

Due to Condition~\ref{cond:transition_rates_g}, we have conveniently and firmly positioned ourselves within the general mean-field theory, see for instance~\cite{Hornung2025}. Consequently, we immediately obtain the following result, which confirms the desired mean-field convergence and establishes the mean-field model as an approximation to the $n$-individual model.
\begin{theorem}\label{thm:mean-field_convergence}
Suppose that Condition~\ref{cond:transition_rates_g} is met. Then the mean-field model exists and is unique. Further, with $\amsmathbb{Q}_n := X^n(\amsmathbb{P})$ and $\bar{\amsmathbb{Q}} := \bar{X}(\amsmathbb{P})$, it holds that $(\amsmathbb{Q}_n)_{n\in\amsmathbb{N}}$ is $\bar{\amsmathbb{Q}}$-chaotic.
\end{theorem}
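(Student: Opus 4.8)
The plan is to place the statement within the general mean-field framework for interacting jump processes and invoke its master convergence result; the substantive work then lies in verifying that our characteristics meet the required regularity. The preceding remark has already reduced Condition~\ref{cond:transition_rates_g} to the statement that the maps $(t,u,h,\rho)\mapsto\mu_{jk}(t,u,h,\int_E g\,\mathrm{d}\rho)$ and $(t,u,h,\rho)\mapsto\lambda_j(t,u,h,\int_E g\,\mathrm{d}\rho)$ are bounded and Lipschitz in $(u,\rho)$ with respect to $\lvert\,\cdot\,\rvert+W_1$ on $[0,T]\times\mathcal{P}_1(E)$. Together with the $\pi$-chaoticity of the initial states and the linear growth~\eqref{eq:linear_growth} of $g$, these are precisely the standing hypotheses of the theory in~\cite{Hornung2025}, which delivers both existence and uniqueness of $\bar{X}$ and the $\bar{\amsmathbb{Q}}$-chaoticity of $(\amsmathbb{Q}_n)_{n\in\amsmathbb{N}}$. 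I would therefore organize the proof as a verification of hypotheses followed by a citation, but sketch below the two mechanisms the cited result encapsulates, since that is where the mathematical content sits.

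For existence and uniqueness of the mean-field model, I would set up a fixed-point problem on the flow of marginal laws. Given a candidate flow $t\mapsto\rho_t\in\mathcal{P}_1(E)$, equivalently a candidate mean $v(t)=\int_E g\,\mathrm{d}\rho_t$, the compensators in~\eqref{eq:mean_field_compensators} become time-inhomogeneous but law-independent, so the reparametrized triple $(\bar{Z},\bar{Y},\bar{H})$ with $\bar{Y}=t-\bar{U}$ is an ordinary Markov jump process (as in Remark~\ref{rmk:semi_markov_reparam} and the proof of Proposition~\ref{prop:semi_markov_forward_health}) whose law is uniquely determined. This defines a map $\Phi:\rho_\cdot\mapsto\mathrm{Law}(\bar{X}^{\rho}_\cdot)$. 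Using the boundedness and $W_1$-Lipschitz bounds above together with a Grönwall argument on $[0,T]$, I would show $\Phi$ is a contraction in a suitable metric (e.g.\ a time-weighted supremum of $W_1$), whose unique fixed point is the mean-field model.

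For chaosticity, the route is propagation of chaos via synchronous coupling. I would build the $n$-individual system and $n$ independent copies $\bar{X}^1,\ldots,\bar{X}^n$ of $\bar{X}$ on a common space, driven by the same Poisson random measures so that individual $\ell$ and its mean-field twin attempt jumps simultaneously. Writing the coupled discrepancy in terms of accumulated differences of the compensators, the Lipschitz bounds convert this into a Grönwall inequality whose forcing term is the fluctuation $\nu^n_t-v(t)$. The latter splits as $\nu^n_t-\tfrac1n\sum_\ell g(\bar X^\ell_t)$, controlled by the coupling itself, plus the centred i.i.d.\ average $\tfrac1n\sum_\ell g(\bar X^\ell_t)-v(t)$, which is of order $n^{-1/2}$ by linear growth of $g$. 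Grönwall then forces the mean coupling distance to zero for each fixed $\ell$; combined with the independence of the $\bar X^\ell$, this yields $\amsmathbb{Q}_n^k\overset{\text{wk}}{\to}\bar{\amsmathbb{Q}}^{\otimes k}$ on the Skorokhod space for every fixed $k$.

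The main obstacle is the duration variable. Because $\bar{U}$ is piecewise deterministic and resets to $0$ at every jump of $\bar{Z}$, a single desynchronized jump between an individual and its mean-field twin can produce an $O(1)$ mismatch in durations, so the Lipschitz-in-$u$ bounds cannot be applied naively along the coupling. The resolution — and the reason the synchronous Poisson coupling is essential — is that between common jump times the two durations drift at identical unit rate, so $\lvert U^{\ell,n}_t-\bar U^\ell_t\rvert$ is governed entirely by discrepancies in jump times and in the discrete coordinates $(Z,H)$. Passing to $\bar{Y}=t-\bar U$ as in Remark~\ref{rmk:semi_markov_reparam} makes the state a genuine jump process and lets one bound the duration mismatch by the probability that the coupling has already failed, thereby closing the Grönwall estimate. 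This is precisely the step that the linear-growth and Lipschitz hypotheses of Condition~\ref{cond:transition_rates_g} are calibrated to control.
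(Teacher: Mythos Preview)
Your proposal is correct and matches the paper's approach: reparametrize via $Y=t-U$ so that the state becomes a genuine jump process, verify that Condition~\ref{cond:transition_rates_g} yields the boundedness and $W_1$-Lipschitz hypotheses needed by the framework in~\cite{Hornung2025}, and then cite the relevant existence/uniqueness and propagation-of-chaos theorems there. The paper's proof does exactly this (invoking Propositions~5.7, 6.6 and Theorems~2.6, 3.5 of~\cite{Hornung2025}); your additional sketches of the fixed-point and synchronous-coupling mechanisms are accurate descriptions of what those cited results encapsulate, but go beyond what the paper itself spells out.
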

\begin{proof}
We adopt the reparameterization of Remark~\ref{rmk:semi_markov_reparam} for both $X^{\ell,n}$ and $\bar{X}$. This yields the following systems of integral equations:
\begin{align*}
\mathrm{d}Z^{\ell,n}_t = \sum_k \big( k - Z^{\ell,n}_{t-}\big) N^{\ell,n}_k(\mathrm{d}t), \hspace{2mm} 
\mathrm{d}Y^{\ell,n}_t &= (t - Y^{\ell,n}_{t-}) \sum_k N^{\ell,n}_k(\mathrm{d}t), \hspace{2mm} 
\mathrm{d}H^{\ell,n}_t = \mathrm{d}H^{\ell,n}_t, \\
\mathrm{d}\bar{Z}_t = \sum_k \big( k - \bar{Z}_{t-}\big) \bar{N}_k(\mathrm{d}t), \hspace{2mm} 
\mathrm{d}\bar{Y}_t &= (t - \bar{Y}_{t-}) \sum_k \bar{N}_k(\mathrm{d}t), \hspace{2mm} 
\mathrm{d}\bar{H}_t = \mathrm{d}\bar{H}_t,
\end{align*}
where $N^{\ell,n}_k := \sum_j N^{\ell,n}_{jk}$ and $\bar{N}_k := \sum_j \bar{N}_{jk}$. Proposition~3.14 of~\cite{Hornung2025} shows that Condition~\ref{cond:transition_rates_g} is sufficient for the validity of Theorem~3.7 and Theorem~4.4 in~\cite{Hornung2025}, with the former ensuring the existence and uniqueness of the mean-field model and the latter confirming the postulated chaosticity.
\end{proof}

We conclude this subsection with a brief discussion on regular conditional distributions in mean-field models. If $Y$ is an ordinary Markov process, then it is completely described by its initial distribution and its transition probabilities. In particular, we may determine its occupation probabilities by integrating the transition probabilities with respect to the initial distribution or, vice versa, we may determine the initial distribution via disintegration of the occupation probabilities with respect to the transition probabilities. In other words, changing the initial distribution only affects the occupation probabilities -- not the transition probabilities; confer also with the discussions in Section~\ref{sec:pre_mean-field}.

In a mean-field model, the transition probabilities depend on the initial distribution, so changing the initial distribution affects not only the occupation probabilities, but also the transition probabilities. This means that caution should be exercised when interpreting conditional expectations such as $\amsmathbb{E}[h(\bar{X}) \, | \, \bar{Z}_0 = i]$ for some suitably regular function $h$. This expectation represents integration with respect to a regular conditional distribution of $\bar{X}$ given $\bar{Z}_0$, that is it involves the transition probabilities of $\bar{X}$, which depend on the limiting initial distribution $\pi$.

\subsection{Actuarial applications}

The contractual payments remain on the form~\eqref{eq:sec2:B}, meaning we consider $B^{\ell,n}$ and $\bar{B}$ given by
\begin{align*}
B^{\ell,n}(\mathrm{d}t)
&=
\sum_j \mathds{1}_{\{Z^{\ell,n}_{t-} = j\}} b_j(t,U^{\ell,n}_{t-}) \mathrm{d}t + \sum_{j \neq k} b_{jk}(t,U^{\ell,n}_{t-}) N^{\ell,n}_{jk}(\mathrm{d}t), \\ 
\bar{B}(\mathrm{d}t)
&=
\sum_j \mathds{1}_{\{\bar{Z}_{t-} = j\}} b_j(t,\bar{U}_{t-}) \mathrm{d}t + \sum_{j \neq k} b_{jk}(t,\bar{U}_{t-}) \bar{N}_{jk}(\mathrm{d}t).
\end{align*}
Recall that $(t,u) \mapsto b_j(t,u)$ and $(t,u) \mapsto b_{jk}(t,u)$ are measurable functions that are bounded on compacts{.

T}he following results provide an actuarial perspective on Theorem~\ref{thm:mean-field_convergence}, illuminating how the {total variation convergence} of state processes give{s} rise to laws of large numbers that substantiate the use of mean-field approximations for, among other things, { pricing and} reserving purposes.
{
\begin{proposition}\label{prop:L2_convergence}
Suppose that {Condition~\ref{cond:transition_rates_g} is} met. It then holds that
\begin{align*}
\frac{1}{n}\sum_{\ell=1}^n \int_0^T e^{-\int_0^t r(s) \, \mathrm{d}s}  B^{\ell,n}(\mathrm{d}t)
\overset{L^{\!2}}{\to}
\amsmathbb{E}\bigg[\int_0^T e^{-\int_0^t r(s) \, \mathrm{d}s}  \bar{B}(\mathrm{d}t) \bigg].
\end{align*}
Furthermore, if $\pi(i) > 0$, then
\begin{align*}
\frac{\frac{1}{n}\sum_{\ell=1}^n \mathds{1}_{\{Z^{\ell,n}_0 = i\}} \int_0^T e^{-\int_0^t r(s) \, \mathrm{d}s}  B^{\ell,n}(\mathrm{d}t)}
{\frac{1}{n} \sum_{\ell=1}^n \mathds{1}_{\{Z^{\ell,n}_0 = i\}}}
\overset{p}{\to}
\amsmathbb{E}\bigg[\int_0^T e^{-\int_0^t r(s) \, \mathrm{d}s}  \bar{B}(\mathrm{d}t) \, \bigg| \, \bar{Z}_0 = i \bigg].
\end{align*}
\end{proposition}
\begin{proof}
We can view $\bar{B}$ as a measurable mapping from the path space of non-explosive jump processes into $\amsmathbb{R}$. Theorem~\ref{thm:mean-field_convergence} ensures chaosticity and Proposition~6.10 of~\cite{Hornung2025}, Condition~\ref{cond:transition_rates_g}, and the boundedness  on compacts of $(t,u) \mapsto b_j(t,u)$ and $(t,u) \mapsto b_{jk}(t,u)$ ensure sufficient integrability. The desired results then follow from Proposition~6.6 of~\cite{Hornung2025}.
\end{proof}}  
The limits appearing in Proposition~\ref{prop:L2_convergence} {a}re so-called mean-field prospective reserves. The state-wise prospective mean-field prospective reserves at contract inception, $(\bar{V}_i)_i$, are defined according to
\begin{align*}
\bar{V}_i
=
\amsmathbb{E}\bigg[\int_0^T e^{-\int_0^t r(s) \, \mathrm{d}s} \bar{B}(\mathrm{d}t) \, \bigg| \, \bar{Z}_0 = i\bigg],
\end{align*}
while the portfolio-wide prospective mean-field reserve at contract inception $\bar{V}$ reads
\begin{align*}
\bar{V}
=
\amsmathbb{E}\bigg[\int_0^T e^{-\int_0^t r(s) \, \mathrm{d}s} \bar{B}(\mathrm{d}t) \bigg].
\end{align*}
If we denote by $(\bar{A}_i)_i$ the so-called state-wise mean-field accumulated cash flows, which are given by
\begin{align*}
\bar{A}_i(\mathrm{d}t) = \amsmathbb{E}[\bar{B}(\mathrm{d}t) \, | \, \bar{Z}_0 = i], 
\end{align*}
then it holds that
\begin{align*}
\bar{V}_i
=
\int_0^T e^{-\int_0^t r(s) \, \mathrm{d}s} \bar{A}_i(\mathrm{d}t).
\end{align*} 
In similar fashion, with $\bar{A}$ the mean-field accumulated cash flow given by 
\begin{align*}
\bar{A}(\mathrm{d}t) = \amsmathbb{E}[\bar{B}(\mathrm{d}t)],
\end{align*}
it holds that
\begin{align*}
\bar{V}
=
\int_0^T e^{-\int_0^t r(s) \, \mathrm{d}s} \bar{A}(\mathrm{d}t).
\end{align*}
These prospective reserves and expected cash flows should be seen in comparison to those of the $n$-individual model, namely $(V^{1,n}_i)_i$, $V^{1,n}$, $(A^{1,n}_i)_i$, and $A^{1,n}$ given by
\begin{align*}
V^{1,n}_i
=
\amsmathbb{E}\bigg[\int_0^T e^{-\int_0^t r(s) \, \mathrm{d}s} B^{1,n}(\mathrm{d}t) \, \bigg| \, Z^{1,n}_0 = i\bigg],
&\quad
V^{1,n}
=
\amsmathbb{E}\bigg[\int_0^T e^{-\int_0^t r(s) \, \mathrm{d}s} B^{1,n}(\mathrm{d}t) \bigg], \\
A^{1,n}_i(\mathrm{d}t) = \amsmathbb{E}[B^{1,n}(\mathrm{d}t) \, | \,Z^{1,n}_0 = i],
&\quad
A^{1,n}(\mathrm{d}t) = \amsmathbb{E}[B^{1,n}(\mathrm{d}t)],
\end{align*}
and satisfying
\begin{align*}
V^{1,n}_i
=
\int_0^T e^{-\int_0^t r(s) \, \mathrm{d}s} A^{1,n}_i(\mathrm{d}t), \quad
V^{1,n}
=
\int_0^T e^{-\int_0^t r(s) \, \mathrm{d}s} A^{1,n}(\mathrm{d}t).
\end{align*}
The following proposition establishes the mean-field cash flows and reserves {as} viable approximations of their $n$-individual counterparts.

{
\begin{proposition}
Suppose that Condition~\ref{cond:transition_rates_g} is met. It then holds that
\begin{align*}
V^{1,n} \to \bar{V} \quad \text{and} \quad \forall t\geq0: A^{1,n}(t) - A^{1,n}(0) \to \bar{A}(t) - \bar{A}(0).
\end{align*}
Furthermore, if $\pi(i)>0$,, then
\begin{align*}
 V^{1,n}_i \to \bar{V}_i \quad \text{and} \quad \forall t\geq0: A^{1,n}_i(t) - A^{1,n}_i(0) \to \bar{A}_i(t) - \bar{A}_i(0).
\end{align*}
\end{proposition}
\begin{proof}
The argument for the expected cash flows is similar to that of the reserves, so we focus on the latter. Theorem~\ref{thm:mean-field_convergence} ensures chaosticity and Proposition~6.10 of~\cite{Hornung2025}, Condition~\ref{cond:transition_rates_g}, and the boundedness on compacts of $(t,u) \mapsto b_j(t,u)$ and $(t,u) \mapsto b_{jk}(t,u)$ ensure sufficient integrability. The desired result then follows from Proposition~6.3 and Proposition~6.4 of~\cite{Hornung2025}.
\end{proof}} 

Having formally verified the usefulness of the mean-field model, we now discuss how to calculate mean-field cash flows and reserves. To this end, let
\begin{align*}
\bar{p}_{ij}(t,u,h) :=&\, \amsmathbb{P}(\bar{Z}_t = j, \bar{U}_t \leq u, \bar{H}_t=h \, | \, \bar{Z}_0 = i), \\
\bar{p}_j(t,u,h) :=&\, \amsmathbb{P}(\bar{Z}_t = j, \bar{U}_t \leq u,\bar{H}_t=h)
\end{align*}
be the mean-field transition probabilities and occupation probabilities, respectively. Note that, with $\pi$ the limiting initial distribution,
\begin{align}\label{eq:sum_formula}
\bar{p}_j = \sum_i \pi(i) \bar{p}_{ij}, \quad \bar{A} = \sum_i \pi(i) \bar{A}_i, \quad \bar{V} = \sum_i \pi(i) \bar{V}_i.
\end{align}

The followin{g p}roposition expands Proposition~\ref{prop:semi_markov_cf} to not only include individual health claims, but also mean-field effects. It should be compared to Proposition~\ref{prop:semi_markov_health_cf}.

\begin{proposition}\label{prop:mean-field_cf}
It holds that
\begin{align*}
\bar{A}_i(\mathrm{d}t)
&=
\sum_j\sum_{h=0}^{\infty}\int_0^t \Big(b_j(t,u)+ \sum_{k : k \neq j} b_{jk}(t,u) \mu_{jk}\big(t,u,h,v(t)\big) \Big)  \bar{p}_{ij}(t,\mathrm{d}u,h) \, \mathrm{d}t, \\
\bar{A}(\mathrm{d}t)
&=
\sum_j \sum_{h=0}^{\infty} \int_0^t \Big(b_j(t,u) + \sum_{k : k \neq j} b_{jk}(t,u) \mu_{jk}\big(t,u,h,v(t)\big)\Big)  \bar{p}_j(t,\mathrm{d}u,h) \, \mathrm{d}t, \\
v(t)
&=
\sum_j \sum_{h=0}^\infty \int_0^t g(j,u,h) \bar{p}_j(t,\mathrm{d}u,h)
=
\sum_i \pi(i) \sum_j \sum_{h=0}^\infty \int_0^t g(j,u,h) \bar{p}_{ij}(t,\mathrm{d}u,h).
\end{align*}
\end{proposition}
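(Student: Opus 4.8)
The plan is to mirror the martingale argument used in the proof of Proposition~\ref{prop:semi_markov_health_cf}, treating the mean-field input $v$ as a prescribed deterministic coefficient. The only genuinely new ingredients relative to the non-interacting case are the appearance of $v(t-)$ inside the compensators and the need to reconcile the conditioning on $\bar{Z}_0 = i$ with the \emph{unconditional} nature of $v$.

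First I would record that $t \mapsto v(t) = \amsmathbb{E}[g(\bar{X}_t)]$ is continuous. Indeed, the compensators in~\eqref{eq:mean_field_compensators} are absolutely continuous in $t$, so $\bar{X}$ has no fixed jump times; hence $\bar{X}_{t-} = \bar{X}_t$ almost surely for each fixed $t$, and $\bar{X}_s \to \bar{X}_t$ almost surely as $s \to t$. Combined with the linear growth of $g$ from~\eqref{eq:linear_growth} and the boundedness of the intensities over the finite horizon, dominated convergence yields continuity of $v$, so that $v(t-) = v(t)$ for every $t$ and the two may be used interchangeably inside the cash flow density.

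Next, exactly as in Proposition~\ref{prop:semi_markov_health_cf}, I would take $\amsmathbb{E}[\,\cdot\,|\,\mathcal{F}_{t-}]$ of $\bar{B}(\mathrm{d}t)$, insert the compensators~\eqref{eq:mean_field_compensators} so that the martingale parts of the counting processes drop out, and then condition on $\bar{Z}_0 = i$ via the tower property. Using $\bar{X}_{t-} = \bar{X}_t$ almost surely together with $v(t-) = v(t)$, this gives
\begin{align*}
\bar{A}_i(\mathrm{d}t)
=
\amsmathbb{E}\Big[ b_{\bar{Z}_t}(t,\bar{U}_t) + \sum_{k : k \neq \bar{Z}_t} b_{\bar{Z}_t k}(t,\bar{U}_t)\, \mu_{\bar{Z}_t k}\big(t,\bar{U}_t,\bar{H}_t,v(t)\big) \,\Big|\, \bar{Z}_0 = i \Big]\, \mathrm{d}t.
\end{align*}
Expressing the conditional expectation as integration against the regular conditional law of $(\bar{Z}_t,\bar{U}_t,\bar{H}_t)$ given $\bar{Z}_0 = i$ -- that is, summing over $j$ and $h$ and integrating $\bar{p}_{ij}(t,\mathrm{d}u,h)$ in the duration variable -- produces the first displayed formula. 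The second formula follows identically after dropping the conditioning and using the occupation measure $\bar{p}_j(t,\mathrm{d}u,h)$, or directly from the first via the averaging identity~\eqref{eq:sum_formula}. For the expression for $v(t)$ I would simply unfold $v(t) = \amsmathbb{E}[g(\bar{Z}_t,\bar{U}_t,\bar{H}_t)]$ and integrate $g$ against $\bar{p}_j(t,\mathrm{d}u,h)$; the second equality in that line is again~\eqref{eq:sum_formula}.

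The main obstacle will not be any computation but the conceptual bookkeeping around $v$. Because the transition probabilities in a mean-field model depend on the initial distribution, conditioning on $\bar{Z}_0 = i$ changes the relevant path law from $\bar{p}_j$ to $\bar{p}_{ij}$; however, $v(t)$ is computed from the \emph{unconditional} distribution of $\bar{X}$ and must be held fixed as a coefficient throughout, rather than recomputed under the conditional law. This is precisely the subtlety flagged in the discussion of regular conditional distributions at the end of Subsection~\ref{sec:mean-field_setup}, and care must be taken that the argument $v(t)$ inside $\mu_{jk}$ is never accidentally replaced by a conditional analogue.
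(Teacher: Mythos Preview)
Your proposal is correct and follows essentially the same approach as the paper, which merely states that the core argument is identical to that of Proposition~\ref{prop:semi_markov_health_cf}. Your added justification that $v(t-)=v(t)$ and your explicit caveat that $v$ must be treated as a fixed deterministic coefficient under conditioning on $\bar{Z}_0=i$ go beyond what the paper spells out, but are accurate and in line with the surrounding discussion.
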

\begin{proof}
The core argument is exactly the same as in the proof of Proposition~\ref{prop:semi_markov_health_cf}.
\end{proof}

It is worth noting that the calculation of mean-field accumulated cash flows, and therefore also mean-field reserves, is thus no more involved than the calculation of expected accumulated cash flows and reserves in the one-individual model -- if occupation and transition probabilities are readily available and intermediaries such as $v$ are easy to calculate.

\begin{example}\label{ex:simple_g_continued}
{Continuing Example~\ref{ex:simple_g} with $g(z,u,h) = \max\{h,\kappa_H\}$, it follows that
\begin{align*}
v(t) = \sum_j \sum_{h=0}^\infty \max\{h,\kappa_H\} \bar{p}_j(t,t,h) = \sum_i \pi(i) \sum_j \sum_{h=0}^\infty \max\{h,\kappa_H\} \bar{p}_{ij}(t,t,h),
\end{align*}
which is a rather straightforward to calculate based on the mean-field occupation or transition probabilities.}
\end{example}

Let us begin by exploring the calculation of the mean-field occupation probabilities. In the following, we adopt the convention that $(t,u,v) \mapsto \lambda_j(t,u,-1,v)$ is constantly zero.

\begin{proposition}\label{prop:forward_eq_occ_mean-field}
Let $d\geq0$. It holds almost everywhere on $[d,\infty)$ that
\begin{align*}
&\frac{\mathrm{d}}{\mathrm{d}t} \bar{p}_{j}(t,t-d,h) \\
&=
\sum_{k : k \neq j} \int_0^t \mu_{kj}\big(t,u,h,v(t)\big) \bar{p}_k(t,\mathrm{d}u,h)-\int_0^{t-d} \mu_{j\bigcdot}\big(t,u,h,v(t)\big) \bar{p}_{j}(t,\mathrm{d}u,h)\\
&\quad+
\int_{0}^{t-d}\lambda_j\big(t,u,h-1,v(t)\big)\bar{p}_{j}(t,\mathrm{d}u,h-1)-\int_0^{t-d}\lambda_j\big(t,u,h,v(t)\big) \bar{p}_{j}(t,\mathrm{d}u,h) 
\end{align*}
with boundary conditions $\bar{p}_j(t,0,h) = \mathds{1}_{\{t=0\}}\mathds{1}_{\{h=0\}}\pi(j)$.
\end{proposition}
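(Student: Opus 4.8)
The plan is to reduce the statement to Proposition~\ref{prop:semi_markov_forward_health} by exploiting the fact that, once the mean-field model is known to exist and be unique via Theorem~\ref{thm:mean-field_convergence}, the map $t \mapsto v(t) = \amsmathbb{E}[g(\bar{X}_t)]$ is a fixed, \emph{deterministic}, measurable function that is bounded on compacts (using the linear growth condition~\eqref{eq:linear_growth} together with the finiteness of first moments inherent in the mean-field construction). First I would freeze $v$ and introduce the time-inhomogeneous rates $(t,u,h) \mapsto \mu_{jk}(t,u,h,v(t))$ and $(t,u,h) \mapsto \lambda_j(t,u,h,v(t))$, which inherit measurability and boundedness on compacts from Condition~\ref{cond:transition_rates_g}. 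With these frozen rates, the compensators in~\eqref{eq:mean_field_compensators} take exactly the form of the individual-health-claim compensators of the preceding subsection, the only difference being that the implicit time dependence of the rates now also runs through $v$.

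Next I would rerun the argument underlying Proposition~\ref{prop:semi_markov_forward_health} essentially verbatim. Passing to the reparameterization $\bar{Y}_t = t - \bar{U}_t$ of Remark~\ref{rmk:semi_markov_reparam}, the process $(\bar{Z},\bar{Y},\bar{H})$ is a Markov jump process whose compensators are predictable with respect to the filtration generated by the process itself; crucially, because $v$ is deterministic its appearance in the rates introduces no additional source of randomness, so no enlargement of the self-generated filtration is needed. Theorem~4.8.1 of~\cite{Jacobsen2006} then shows that the law of $(\bar{Z},\bar{Y},\bar{H})$ is characterized by these compensators, and~(7.17) on p.~151 of~\cite{Jacobsen2006} furnishes the corresponding system of integro-differential equations. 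Undoing the change of variable exactly as in Remark~\ref{rmk:semi_markov_reparam} yields the four-term right-hand side of the claimed equation, with $\mu_{jk}(t,u,h)$ and $\lambda_j(t,u,h)$ replaced by $\mu_{jk}(t,u,h,v(t))$ and $\lambda_j(t,u,h,v(t))$. The boundary condition $\bar{p}_j(t,0,h) = \mathds{1}_{\{t=0\}}\mathds{1}_{\{h=0\}}\pi(j)$ comes directly from the assumed $\pi$-chaosticity of the initial states together with $\bar{U}_0 = \bar{H}_0 = 0$, so that $\pi$ enters as the limiting initial distribution.

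The main obstacle, and really the only point of genuine novelty relative to Proposition~\ref{prop:semi_markov_forward_health}, is the justification that $v$ may legitimately be treated as a given deterministic input rather than a self-referential object. This is precisely where the existence-and-uniqueness half of Theorem~\ref{thm:mean-field_convergence} does the work: it resolves the fixed-point coupling between the law of $\bar{X}$ and the mean $v$, after which $v$ is simply a fixed function and the process is an ordinary inhomogeneous Markov jump process. One should also confirm measurability and local boundedness of $t \mapsto v(t)$ so that the frozen rates satisfy the hypotheses under which~\cite{Jacobsen2006} applies; this follows from the boundedness of the rates and the uniform-in-compacts moment control inherent in the mean-field construction. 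Beyond these verifications the computation is identical to the individual case, which is why it suffices to observe that the core argument is the same as in the proof of Proposition~\ref{prop:semi_markov_forward_health}.
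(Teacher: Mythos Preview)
Your approach is correct and, in a sense, more transparent than the paper's. The paper proves the proposition by reparameterizing via $\bar{Y}_t = t - \bar{U}_t$ (as you do) and then directly invoking Proposition~2.9 of~\cite{Hornung2025}, a ready-made forward equation for mean-field jump processes. You instead first use the existence-and-uniqueness half of Theorem~\ref{thm:mean-field_convergence} to fix $v$ as a deterministic function, thereby reducing $\bar{X}$ to an ordinary time-inhomogeneous Markov jump process, and then rerun the Jacobsen-based argument of Proposition~\ref{prop:semi_markov_forward_health} verbatim. Your route makes the logical structure explicit---the non-linearity of the system lies entirely in the self-consistency of $v$, and once that fixed point is resolved nothing new is needed beyond the individual-health-claim case---whereas the paper's route is shorter but hides this reduction inside the cited external result. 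Both ultimately lean on~\cite{Hornung2025}, just at different stages.
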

\begin{proof}
Adopting the reparameterization and change of variables from Remark~\ref{rmk:semi_markov_reparam}, the result follows from Proposition~3.9 of~\cite{Hornung2025}, which yields a forward equation for $(\bar{Z},\bar{Y},\bar{H})$.
\end{proof}

Contrary to the second part of Proposition~\ref{prop:semi_markov_forward_health}, the forward equation for the mean-field occupation probabilities is non-linear since the transition rates and health claim hazards depend on $v$ (in a possibly non-linear fashion) and $v$ is a function of the occupation probabilities themselves, confer with Proposition~\ref{prop:mean-field_cf}. As a consequence, we can no longer be completely assured that the equations admit a unique solution. The practical consequence is that the actuary must be particularly vigilant regarding a solution's mathematical characteristics and concrete impact.

Furthermore, the non-linearity of the equations gives reason to handle the mean-field transition probabilities with additional care. In the one-individual model, confer with Remark~\ref{rmk:occ_to_trans}, the evolution in time of the occupation and transition probabilities was identical and, hence, we could calculate the transition probabilities using the same equations as for the occupation probabilities, but with changed initial conditions (from $\pi(j)$ to $\mathds{1}_{\{i = j\}}$). However, as we discussed briefly at the end of Subsection~\ref{sec:mean-field_setup}, in the mean-field model the evolution in time depends on the initial distribution through $v$. The intuition is as follows. The mean-field occupation probabilities $(\bar{p}_j)_j$ represent the occupation probabilities for a typical individual in a very large group of identically distributed individuals, who \textit{all} have initial distribution $\pi$ and all depend on each other through the group average. Thus changing the initial distribution $\pi$ not only corresponds to changing the initial distribution of one individual, but it also corresponds to changing the initial distribution of the \textit{entire} group and thus the group average. Or, in other words, trying to equate occupation probabilities from equations with different initial conditions corresponds to equating the occupation probabilities of two individuals with different initial conditions, but also from two different groups!

In summary, we \textit{cannot} calculate the mean-field transition probabilities by simply changing the initial (or boundary) conditions. However, if we treat $v$, which depends on the occupation probabilities, as fixed, we may actually calculate the mean-field transition probabilities in the usual manner.

\begin{proposition}\label{prop:forward_eq_trans_mean-field}
Let $d\geq0$. It holds almost everywhere on $[d,\infty)$ that
\begin{align*}
&\frac{\mathrm{d}}{\mathrm{d}t} \bar{p}_{ij}(t,t-d,h) \\
&=
\sum_{k : k \neq j} \int_0^t \mu_{kj}\big(t,u,h,v(t)\big) \bar{p}_{ij}(t,\mathrm{d}u,h)-\int_0^{t-d} \mu_{j\bigcdot}\big(t,u,h,v(t)\big) \bar{p}_{ij}(t,\mathrm{d}u,h)\\
&\quad+
\int_{0}^{t-d}\lambda_j\big(t,u,h-1,v(t)\big)\bar{p}_{ij}(t,\mathrm{d}u,h-1)-\int_0^{t-d}\lambda_j\big(t,u,h,v(t)\big) \bar{p}_{ij}(t,\mathrm{d}u,h) 
\end{align*}
with boundary conditions $\bar{p}_{ij}(t,0,h) = \mathds{1}_{\{t=0\}}\mathds{1}_{\{h=0\}}\mathds{1}_{\{i = j\}}$.
\end{proposition}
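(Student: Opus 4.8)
The plan is to exploit the key structural observation made in the discussion preceding the statement: although the mean-field model is non-linear because the compensators in~\eqref{eq:mean_field_compensators} depend on the law of $\bar{X}$ through $v(t) = \amsmathbb{E}[g(\bar{X}_t)]$, the map $t \mapsto v(t)$ is a \emph{deterministic} function that is fully pinned down by the limiting initial distribution $\pi$ via the occupation probabilities of Proposition~\ref{prop:forward_eq_occ_mean-field}. Crucially, $v$ is determined \emph{before} any conditioning on $\bar{Z}_0 = i$ takes place; it encodes the behaviour of the entire identically-distributed group, all of whose members carry the initial distribution $\pi$. I would therefore begin by treating $v$ as a fixed, known external input and then argue that, with $v$ frozen, the problem collapses to an ordinary (time-inhomogeneous) semi-Markov problem with individual health claims, for which Proposition~\ref{prop:semi_markov_forward_health} already supplies the forward equation.

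Concretely, once $v$ is held fixed I would introduce the effective transition rates and hazards
\begin{align*}
\ti{\mu}_{jk}(t,u,h) := \mu_{jk}\big(t,u,h,v(t)\big), \qquad \ti{\lambda}_j(t,u,h) := \lambda_j\big(t,u,h,v(t)\big),
\end{align*}
which are measurable and bounded on compacts by Condition~\ref{cond:transition_rates_g}. Substituting these into~\eqref{eq:mean_field_compensators} shows that the compensators of $\bar{X}$ now depend only on the individual's own state $(\bar{Z}_{t-},\bar{U}_{t-},\bar{H}_{t-})$ together with the deterministic clock $t$; all genuine distribution-dependence has been absorbed into the time-dependence of $\ti{\mu}$ and $\ti{\lambda}$. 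Hence the regular conditional law of $\bar{X}$ given $\bar{Z}_0 = i$ is precisely the law of the semi-Markov-with-health-claims process of the preceding subsection, driven by the rates $\ti{\mu}_{jk}$ and hazards $\ti{\lambda}_j$ and started deterministically in state $i$.

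With this reduction in hand I would transport the proof of Proposition~\ref{prop:semi_markov_forward_health} essentially verbatim: pass to the reparameterized process $(\bar{Z},\bar{Y},\bar{H})$ with $\bar{Y}_t = t - \bar{U}_t$ as in Remark~\ref{rmk:semi_markov_reparam}, invoke the characterization of a Markov jump process by its compensators (Theorem~4.8.1 of~\cite{Jacobsen2006}) to obtain the forward equation in the $y$-variable, and undo the change of variables. The inflow term then carries the probability mass of paths that occupy some state $k \neq j$ at durations $u \in [0,t)$ and jump into $j$ at time $t$, the first outflow term removes paths leaving $j$, and the two $\lambda$-terms account respectively for the $h$-th and $(h+1)$-th health claim occurring at time $t$; this split survives because simultaneous jumps of $\bar{N}$ and $\bar{H}$ are excluded by assumption. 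The boundary condition inherits the deterministic start: conditioning on $\bar{Z}_0 = i$ replaces the weight $\pi(j)$ of Proposition~\ref{prop:forward_eq_occ_mean-field} by $\mathds{1}_{\{i=j\}}$, yielding $\bar{p}_{ij}(t,0,h) = \mathds{1}_{\{t=0\}}\mathds{1}_{\{h=0\}}\mathds{1}_{\{i=j\}}$.

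The main obstacle --- and the very reason for stating this as a separate result --- is justifying that $v$ may legitimately be frozen at the value computed from the \emph{unconditional} law and is \emph{not} silently recomputed under the conditional law $\amsmathbb{P}(\,\cdot \mid \bar{Z}_0 = i)$. This is exactly the trap flagged by the surrounding discussion and by the treatment of regular conditional distributions at the end of Subsection~\ref{sec:mean-field_setup}: one cannot obtain $\bar{p}_{ij}$ from Proposition~\ref{prop:forward_eq_occ_mean-field} by merely swapping boundary conditions, precisely because there $v$ is entangled with $\pi$ through the self-referential relation of Proposition~\ref{prop:mean-field_cf}. The resolution is that $v$ is deterministic, hence trivially $\sigma(\bar{Z}_0)$-measurable, so it is unaffected by the conditioning; the compensators~\eqref{eq:mean_field_compensators} of the regular conditional distribution therefore retain the same $v$, and the reduction to the fixed-$v$ ordinary Markov process is valid. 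I would make this precise by appealing to the lifting of the mean-field dynamics to regular conditional distributions (Theorem~4.3 of~\cite{Hornung2025}), which guarantees that the conditional individual dynamics keep the externally determined mean $v$.
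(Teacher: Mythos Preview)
Your argument is correct. The paper's own proof is a one-liner: it adopts the reparameterization of Remark~\ref{rmk:semi_markov_reparam} and then cites Proposition~2.8 of~\cite{Hornung2025}, which directly provides the forward equation for the regular conditional distribution in a mean-field jump model. Your route is different and more self-contained: you observe that $t \mapsto v(t)$ is a deterministic function fixed by $\pi$, absorb it into effective rates $\tilde{\mu}_{jk}$ and $\tilde{\lambda}_j$, and thereby reduce the conditional problem to the ordinary semi-Markov-with-health-claims setting already treated in Proposition~\ref{prop:semi_markov_forward_health}. What this buys you is transparency --- your argument makes explicit \emph{why} the equation is linear once $v$ is frozen and why the boundary condition changes while the dynamics do not, which is precisely the conceptual content the paper places in the surrounding prose rather than in the proof. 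The paper's approach is shorter and leans on the general mean-field machinery of~\cite{Hornung2025} to do the bookkeeping in one stroke; yours trades that brevity for an argument that stays almost entirely within the paper's own results, invoking~\cite{Hornung2025} only to certify that conditioning preserves the externally fixed $v$. Both are valid, and your reduction is essentially how one would expect Proposition~2.8 of~\cite{Hornung2025} to be proved in this concrete setting.
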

\begin{proof}
Adopting the reparameterization and change of variables from Remark~\ref{rmk:semi_markov_reparam}, the result follows from Proposition~3.11 of~\cite{Hornung2025}, which yields a forward equation for $(\bar{Z},\bar{Y},\bar{H})$.
\end{proof}

The intuition behind these linearized forward equations is as follows. Imagine a very large group of $M$ individuals of which all but one have initial distribution $\pi$, while the remaining individual has a degenerate initial distribution, say this individual's initial state is almost surely $i$. Since the individuals solely depend on each other through their group average, and the contribution of one individual to this average is negligible as $M \to \infty$, we may still replace the average by $v$ -- also when calculating the occupation probabilities of the remaining individual. However, for this individual the occupation probabilities now correspond to the mean-field transition probabilities $(\bar{p}_{ij})_j$.

Collecting results leaves us with two ways of calculating the mean-field transition probabilities. Either we first determine the mean-field occupation probabilities by solving the non-linear forward equations from Proposition~\ref{prop:forward_eq_occ_mean-field}, use these to calculate $v$, and then finally find the mean-field transition probabilities by solving the linearized forward equations from Proposition~\ref{prop:forward_eq_trans_mean-field}.  Alternatively, we recall that
\begin{align*}
v(t)
=
\sum_i \pi(i) \sum_j \sum_{h=0}^\infty \int_0^t g(j,u,h) \bar{p}_{ij}(t,\mathrm{d}u,h),
\end{align*}
consider the forward equations from Proposition~\ref{prop:forward_eq_trans_mean-field} as non-linear, and solve these directly. If one is only interested in the mean-field transition probabilities $(\bar{p}_{ij})_j$ for a specific $i$, the first method is to be preferred. Otherwise, neither method is inherently superior to the other and, in any case, for both methods the non-linearity entails that we may no longer be absolutely certain that the resulting solution is unique. 

\subsection{Statistical aspects}

For most practical purposes, estimates of the (collective-dependent) health claims hazards $(t, u, h, y) \mapsto \lambda_j(t,u,h,y)$ and transition rates $(t,u,h,y) \mapsto \mu_{jk}(t,u,h,y)$, $j \neq k$, are required. If only a single collective is observed, identifiability of the collective effect may become particularly challenging. However, as briefly described in the introduction , we have the example of company level insurance plans in mind -- with the insurer signing contracts with several (relatively independent) companies. In the following, we therefore outline how estimates may be obtained in the presence of multiple, mutually independent, groups; for notational convenience, we omit the inclusion of individual- and company-level covariates.

We begin by considering a single company consisting of $n$ employees observed in the interval $[0,R^n]$, with $R^n$ a common random time describing right-censoring of the company. Subject to classic assumptions, including \textit{independent right-censoring}, the partial log-likelihoods
\begin{align*}
\log\{\mathcal{L}\}
&=
\sum_{\ell=1}^n \int_0^{R^n} \log\!\big\{\lambda_{Z^{\ell,n}_{t-}}\big(t,U_{t-}^{\ell,n},H_{t-}^{\ell,n},\nu_{t-}^n\big)\big\} H^{\ell,n}(\mathrm{d}t) \\
&\quad -
\sum_{\ell=1}^n \int_0^{R^n} \lambda_{Z^{\ell,n}_{t-}}\big(t,U_{t-}^{\ell,n},H_{t-}^{\ell,n},\nu_{t-}^n\big)  \, \mathrm{d}t, \\
\log\{\mathcal{L}_{jk}\}  
&=
\sum_{\ell=1}^n \int_0^{R^n} \log\!\big\{\mu_{jk}\big(t,U_{t-}^{\ell,n},H_{t-}^{\ell,n},\nu_{t-}^n\big)\big\} N^{\ell,n}_{jk}(\mathrm{d}t) \\
&\quad -
\sum_{\ell=1}^n \int_0^{R^n} \mu_{jk}\big(t,U_{t-}^{\ell,n},H_{t-}^{\ell,n},\nu_{t-}^n\big) \mathds{1}_{\{Z^{\ell,n}_{t-} = j\}} \, \mathrm{d}t, \\
\end{align*}
offer a reasonable starting point for inference, confer with Section~III.4 in~\cite{AndersenBorganGillKeiding1993}.

In the presence of multiple, mutually independent, companies, the relevant partial log-likelihoods are simply sums of the each company's contribution. Temporarily discretizing the transition rates and health claims hazards using a grid with time, duration, etc., usually produces a good approximation, and the resulting expressions correspond to Poisson likelihoods with occurrences and exposures as one might expect. Therefore, estimates of health claims hazards and transition rates may be obtained non-parametrically, semi-parametrically, or parametrically using standard techniques for occurrence and exposure data. 

\section{Practical implementation}\label{sec:practical}

The expected accumulated cash flows and prospective reserves may be computed from the transition and occupation probabilities via numerical integration and, for instance, using the trapezoidal rule. It is the computaton of probabilities based on forward integro-differential equations that requires special attention. In the following, we briefly describe how the meta-algorithm of~\cite{BuchardtMollerSchmidt2015} can be adapted to be fit for purpose for the task at hand. We focus on the system of Proposition~\ref{prop:forward_eq_occ_mean-field}; the other systems of integro-differential equations are, ultimately, special cases or of significantly less sophistication. 

Obviously, $\bar{p}_j(t,t-d,h) = 0$ for $d>t$ and $\bar{p}_j(t,t-d,h) = \bar{p}_j(t,t,h)$ for $d < 0$. We therefore for $\eta > 0$ with $T/\eta \in \amsmathbb{N}$ consider the discretization $\mathcal{D}$ of $\{(t,d,h) \in [0,T]^2 \times \amsmathbb{N}_0  : d \leq t\}$ consisting of points $(\eta m, \eta n, h)$ for $n,m,h \in \amsmathbb{N}_0$ with $n \leq m \leq T/\eta$.

The goal is to calculate $(\bar{p}_j)_j$ on $\mathcal{D}$. This first involves selecting a cut-off $K_H$ and for all $j$ equating $\bar{p}_j(\cdot,\cdot,h)$ with zero for $h>K_H$. To select the cut-off, one may look for a deterministic constant $\tilde{\lambda}$ which uniformly bounds the health claims hazards on $[0,T]$, and then for an error threshold $\text{err}>0$ select $K_H = \inf\{K \in \amsmathbb{N}_0 : \amsmathbb{P}(\tilde{H} > K) < \text{err}\}$, where $\tilde{H} \sim \text{Poisson}(\tilde{\lambda} T)$. Next, one can apply the following meta-algorithm:

\begin{adjustwidth}{5mm}{5mm}

\textbf{Initial stage ($0$).} The boundary conditions yield the values $\bar{p}_j(0,0,h) = \mathds{1}_{\{h=0\}}\pi(j)$ for all $j$ and all $h$. 

\textbf{Subsequent stages ($m+1$).} The non-linear integro-differential equations of Proposition~\ref{prop:forward_eq_occ_mean-field} together with the formula for $v$ of Proposition~\ref{prop:mean-field_cf} may be used to calculate
\begin{align*}
&\bar{p}_j(\eta (m+1),\eta (m+1) - d,h) \\ &\text{for all } j \text{ and all } h \text{ and } d\in\{0,\eta,\eta2,\ldots,\eta (m+1)\}
\end{align*}
based on 
\begin{align*}
&\bar{p}_j(\eta m,\eta m - d,h) \text{ for all } j \text{ and all } h \text{ and } d\in\{0,\eta,\eta2,\ldots,\eta m\}
\end{align*}
and the boundary conditions
\begin{align*}
&\bar{p}_j(\eta (m+1),0,h) = 0 \text{ for all } j \text{ and all } h. 
\end{align*}
This can be done using for instance Euler steps, taking care of the integrals via the trapezoidal rule. Efficiency may be gained by storing and reusing computations related to the integrals.
\end{adjustwidth}

The sequence of stages for $m=0,1,2,3$ is illustrated in Figure~\ref{fig:num_scheme}, which mirrors Figure~2 in~\cite{BuchardtMollerSchmidt2015}. The time complexity given a cut-off $K_H$ is of the same order as in the classic semi-Markov disability model, but differs by about a factor of $K_H$. 

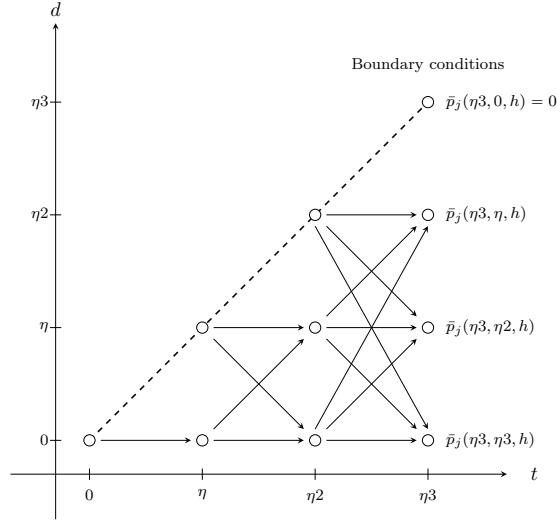
\begin{figure}[ht!]
    \centering
    \scalebox{0.75}{
\begin{tikzpicture}[scale=2, >=stealth]

\draw[->] (-0.2,0.2) -- (4.2,0.2) node[right=0.3cm] {$t$};
\draw[->] (0.2,-0.2) -- (0.2,4.2) node[above] {$d$};

\foreach \x in {0.5,1.5,2.5,3.5} {
    \draw[shift={(\x,0.2)}] (0,0.05) -- (0,-0.05);
}
\foreach \y in {0.5,1.5,2.5,3.5} {
    \draw[shift={(0.2,\y)}] (0.05,0) -- (-0.05,0);
}

\draw[dashed, black, thick] (0.5, 0.5) -- (3.5, 3.5);
\node[above] at (3.5,3.7) {\footnotesize Boundary conditions};

\draw[fill=white] (0.5,0.5) circle(0.05);  
\draw[fill=white] (1.5,0.5) circle(0.05);  
\draw[fill=white] (2.5,0.5) circle(0.05);  
\draw[fill=white] (3.5,0.5) circle(0.05);  
\draw[fill=white] (1.5,1.5) circle(0.05);  
\draw[fill=white] (2.5,1.5) circle(0.05);  
\draw[fill=white] (3.5,1.5) circle(0.05);  
\draw[fill=white] (2.5,2.5) circle(0.05);  
\draw[fill=white] (3.5,2.5) circle(0.05);  
\draw[fill=white] (3.5,3.5) circle(0.05);  

\draw[black, ->] (0.6,0.5) -- (1.4,0.5); 
\draw[black, ->] (1.6,1.5) -- (2.4,1.5); 
\draw[black, ->] (1.6,0.6) -- (2.4,1.4); 
\draw[black, ->] (1.6,0.5) -- (2.4,0.5); 
\draw[black, ->] (1.6,1.4) -- (2.4,0.6); 
\draw[black, ->] (2.6,2.5) -- (3.4,2.5); 
\draw[black, ->] (2.6,1.6) -- (3.4,2.4); 
\draw[black, ->] (2.5,0.6) -- (3.5,2.4); 
\draw[black, ->] (2.6,2.4) -- (3.4,1.6); 
\draw[black, ->] (2.6,1.5) -- (3.4,1.5); 
\draw[black, ->] (2.6,0.6) -- (3.4,1.4); 
\draw[black, ->] (2.5,2.4) -- (3.5,0.6); 
\draw[black, ->] (2.6,1.4) -- (3.4,0.6); 
\draw[black, ->] (2.6,0.5) -- (3.4,0.5); 

\node[black, right] at (3.6,3.5) {\footnotesize $\bar{p}_j(\eta 3, 0, h) = 0$};
\node[black, right] at (3.6,2.5) {\footnotesize $\bar{p}_j(\eta 3, \eta, h)$};
\node[black, right] at (3.6,1.5) {\footnotesize $\bar{p}_j(\eta 3, \eta2, h)$};
\node[black, right] at (3.6,0.5) {\footnotesize $\bar{p}_j(\eta 3, \eta3, h)$};

\node[below=0.15cm] at (0.5,0.2) {\footnotesize $0$};
\node[below=0.15cm] at (1.5,0.2) {\footnotesize $\eta$};
\node[below=0.15cm] at (2.5,0.2) {\footnotesize $\eta2$};
\node[below=0.15cm] at (3.5,0.2) {\footnotesize $\eta3$};
\node[left] at (0.2,0.5) {\footnotesize $0$};
\node[left] at (0.2,1.5) {\footnotesize $\eta$};
\node[left] at (0.2,2.5) {\footnotesize $\eta2$};
\node[left] at (0.2,3.5) {\footnotesize $\eta3$};

\end{tikzpicture}}
    \caption{Illustration of a sequence of stages in the meta-algorithm.}
    \label{fig:num_scheme}
\end{figure}

\section{Simulation study}\label{sec:sim_study}

The purpose of this section is to assess the quality of the mean-field approximation through a practice-oriented simulation study. In Subsection~\ref{sec:sim_setup}, we outline the specific model and its connection to practice. Subsection~\ref{sec:sim_reserve} concerns the quality of mean-field reserves compared to a naïve Monte Carlo approach{, while} Subsection~\ref{sec:sim_findings} presents some additional insights on the nature and the convergence of the mean-field approximation. {Finally, in Subsection~\ref{sec:lapse} we discuss how to adapt the model to also handle lapse risk -- from a practical perspective.}

\subsection{Setup} \label{sec:sim_setup}

In order to specify a concrete model, we must specify the initial distribution, transition rates $(t,u,h,y) \mapsto \mu_{jk}(t,u,h,y)$, $j \neq k$, health claims hazards $(t, u, h, y) \mapsto \lambda_j(t,u,h,y)$, and the function $g$. For the initial distribution, we assume for simplicity that individuals are independent and active at time zero. For the health claims hazards, we follow Example~\ref{ex:health_claims} and choose $\lambda_j(t,u,h,y)\equiv \lambda_j$ with parameter values
\begin{align*}
\lambda_1 = 0.2, \quad \lambda_2 = 0.3, \quad \lambda_3 = 0,
\end{align*}
implying in particular that health claims are $50\, \%$ more likely while disabled than active. For the transition rates, we choose
\begin{align*}
    \mu_{12}(t,y)&=e^{-9.55+0.24(t+45) - 0.0046(t + 45)^2 + 0.000036(t + 45)^3} e^{\beta\min\!\big\{\frac{1}{1+t}(y+\zeta_1)-\zeta_1,\zeta_0\big\}}, \\
    \mu_{13}(t)&=0.0005 + 10^{5.52 + 0.038(t + 45) - 10}, \\
    \mu_{21}(t,u)&=e^{2.11 - 0.039  (t + 45) - 1.44  u}, \\
    \mu_{23}(t,u)&=0.0005 + 10^{5.52 + 0.038(t + 45) - 10} + e^{-2.79 - 0.23  u}.
\end{align*}
In particular, solely the disability rate depends on the collective and solely the recovery rate and the disability mortality depend on duration. The collective effect on the disability rate is included via the term
\begin{align*}
\beta\min\!\Big\{\frac{1}{1+t}(y+\zeta_1)-\zeta_1,\zeta_0\Big\}
\end{align*}
with parameter values
\begin{align*}
\beta = 2, \quad \zeta_1 = 0.1, \quad \zeta_0 = 0.4.
\end{align*}
Finally, we follow Examples~\ref{ex:simple_g} and~\ref{ex:simple_g_continued} and choose {$g(z,u,h) \equiv \max\{h,\kappa_H\}$ with $\kappa_H = 100$, meaning that the dependence on the collective stems practically speaking} only from the average of health claims. This corresponds to the average and expectation 
\begin{align*}
\nu^n_t = \frac{1}{n} \sum_{\ell=1}^n {\max\{H^{\ell,n}_t,\kappa_H\}}, \quad v(t) = \sum_j \sum_{h=0}^\infty {\max\{h,\kappa_H\}} \bar{p}_{1j}(t,t,h)
\end{align*}
for the $n$-individual and mean-field model, respectively.

The collective health claims influence the disability rate $\mu_{12}$ by means of a credibility factor, taking into account time passed. Inserting $y = \nu^n$, we identify the term
\begin{align*}
\frac{1}{1+t}(\nu^n_t+\zeta_1) = \frac{t}{t+1} \frac{\nu^n_t}{t} + \frac{1}{t+1} \zeta_1
\end{align*}
as a credibility formula between the collective rate $\nu^n_t / t$ and a baseline $\zeta_1$. Consequently,
\begin{align*}
\frac{1}{1+t}(\nu^n_t+\zeta_1) - \zeta_1
\end{align*}
yields the deviation from the baseline. At time zero, where no collective information is available, all weight is placed on the baseline $\zeta_1$. As time passes -- and more and more information becomes available -- more and more weight is based on the collective rate $\nu^n_t / t$. However, by introducing a maximum positive deviation given by the parameter $\zeta_0$, we ensure that in no case can the deviation exceed $\zeta_0${. F}inally, the parameter $\beta$ controls the influence of health claims on the disability rate.

The parametrizations mirror forms seen in practice, and the parameter values are chosen to obtain rates which are reasonable for an individual of age $45$ years. Select parameter values are collected in Table~\ref{tbl:params}. It is worth noting that the baseline $\zeta_1$ is chosen quite low compared to the health claims rates $\lambda_1$, $\lambda_2$, $\lambda_3$.

  \begin{table}[h!]
\begin{tabular}{lrl}
\hline
Parameter		& \multicolumn{1}{l}{Value}	& Description \\ \hline
$\lambda_1$	& 0.2						& Health claims rate, active\\
$\lambda_2$	& 0.3						& Health claims rate, disabled\\
$\lambda_3$	& 0							& Health claims rate, dead\\
$\zeta_1$		& 0.1						& Collective health claims effect, baseline\\
$\zeta_0$		& 0.5						& Collective health claims effect, maximum \\
$\beta$		& 2							& Collective health claims effect, influence \\  
\hline
\end{tabular}
\caption{Select parameter values for the simulation study.}
\label{tbl:params}
\end{table}

Following Example~\ref{ex:waiting_period_continued}, the contractual payments correspond to a disability annuity with a waiting period. That is, we consider
\begin{align*}
\mathrm{d}B_t^{\ell,n}=\mathds{1}_{(Z_{t-}^{\ell,n}=2)}b_2(t,U_{t{-}}^{\ell,n})\mathrm{d}t
\end{align*}
with $b_2(t,u)=\mathds{1}_{(u\geq \varepsilon)}b$. We choose $b = 1$ and $\varepsilon = 0.25$, the latter corresponding to the rather common waiting period of three months. Finally, we choose $r = 0.01$ and $T = 25$.

\subsection{Main results} \label{sec:sim_reserve}

Since solving the forward integro-differential equations for the $n$-individual model is not computationally feasible for $n \gg 1$, we mainly consider two ways of calculating the reserve $V^{1,n}$:  
\begin{enumerate}
\item By means of a naïve Monte Carlo method, repeatedly simulating the $n$-individual model
\item Employing the mean-field approximation $\bar{V} \approx V^{1,n}$ and solving the resulting non-linear integro-differential equations.
\end{enumerate}
In the naïve Monte Carlo method, we repeatedly sample paths of the process $X^n = (X^{1,n},\ldots,X^{n,n})$ via inhomogeneous Poisson processes using the by now classic acceptance-rejection method described in~\cite{LewisShedler1979}. Denoting the samples by $m = 1, \ldots, M$, this yields the estimate
\begin{align*}
\frac{1}{M} \sum_{m=1}^M \bigg(\frac{1}{n}\sum_{\ell=1}^n \int_0^T e^{-\int_0^t r(s) \, \mathrm{d}s} B^{\ell,n,m}(\mathrm{d}t)\bigg)\!,
\end{align*}
which for $n \gg {1}$ should have substantially lower variance than the neonatal estimate
\begin{align*}
\frac{1}{M} \sum_{m=1}^M \int_0^T e^{-\int_0^t r(s) \, \mathrm{d}s} B^{1,n,m}(\mathrm{d}t)
\end{align*}
due to chaosticity. 

To solve the integro-differential equations, compute the expected accumulated cash flows, and finally calculate the reserves, we {adopt} the implementation outlined in Section~\ref{sec:practical}. In particular, we {employ} the Euler method and, for numerical integration, the trapezoidal rule. {The numerical methods are implemented in the programming language R. For the choice of cut-off, $K_H = 20$ is more than sufficient since the health claims hazards are bounded by $\tilde{\lambda}=0.3$ and with $\tilde{H} \sim \text{Poisson}(\tilde{\lambda}T)$ the excess probability $\amsmathbb{P}(\tilde{H} > 20)$ is negligible. In Figure~\ref{fig:conv}, we illustrate the convergence of the numerical method in terms of the total number of steps and the cut-off $K_H$. The top plot shows the computed reserve for a fixed cut-off $K_H = 20$ and various total number of steps. Based hereon, we select a step length of $0.0125$ corresponding to a total number of steps of $2,000$. The bottom plot shows the computed reserve for this step length and various cut-offs. Based hereon, we select a cut-off of $K_H = 15$.}

For $n > 1$ we only consider the mean-field approximation. For $n = 1$ we also consider the one-individual model, yielding the `true' value for the reserve. We present our main results in Table~\ref{tbl:main}; for the Monte Carlo method, a large sample size of $M = 40,000$ is chosen. The first observation is that that both the mean-field approximation and the Monte Carlo estimate deviate substantially from the true value for the one-individual model. While this is to be expected from the mean-field approximation, and indicates that the collective effect on the disability rate is non-negligible, it also signifies that the Monte Carlo estimate has not converged for $n = 1$. For $n=2,5$ we continue to see substantial differences between the mean-field approximation and the Monte Carlo estimates, while for $n=25,50,100$ the differences are less pronounced. The seemingly increased stability of the Monte Carlo estimates for larger $n$ are due to the aforementioned chaosticity-induced variance reduction.

\begin{figure}[h!]
    \centering
    \includegraphics[width=0.9\textwidth]{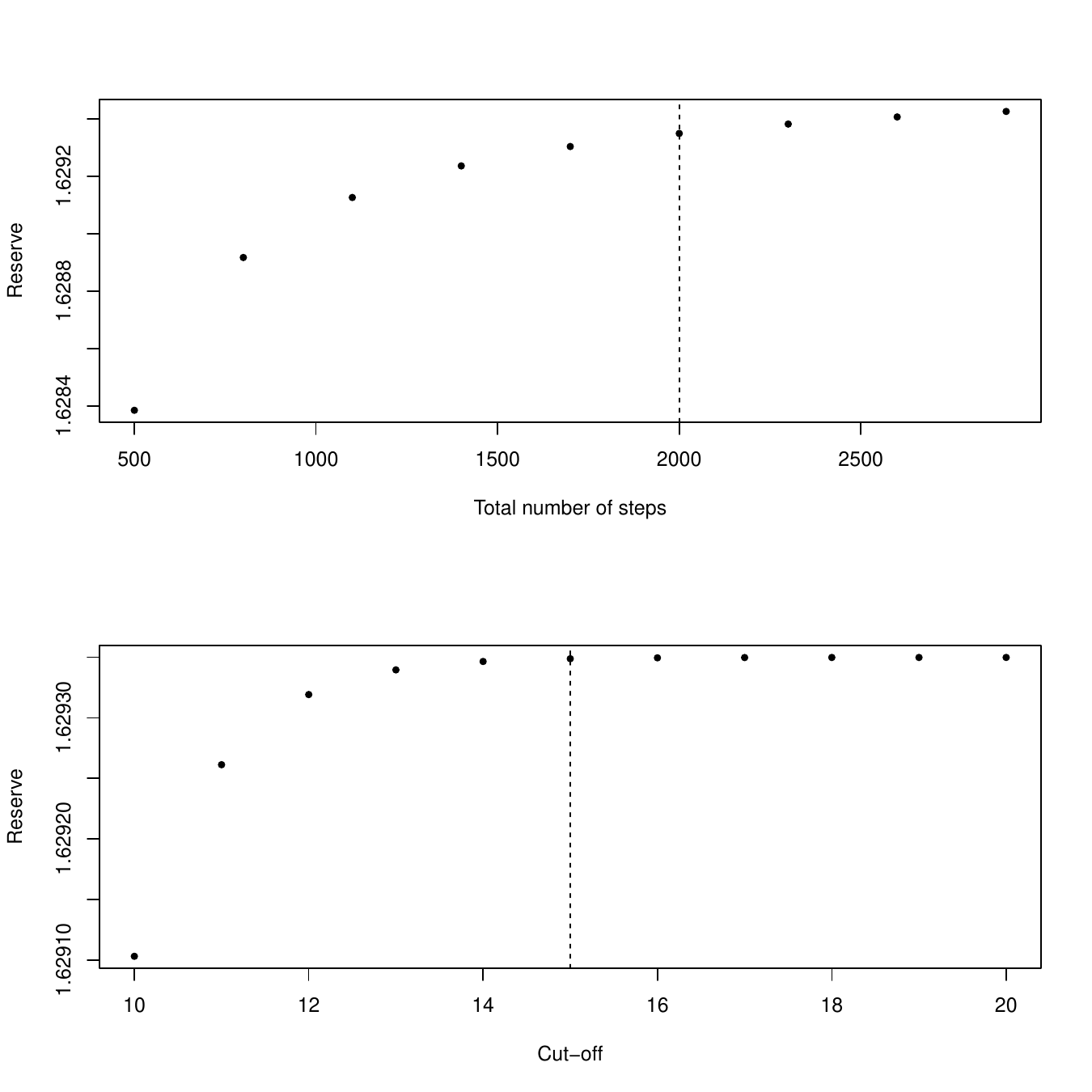}
    \caption{{Convergence of the numerical method as a function of the total number of steps with the fixed cut-off $K_H = 20$ (top) and as a function of the cut-off with $2,000$ total number of steps (bottom); the dotted lines indicate the values selected for the subsequent analyses. \\[4mm]}} 
    \label{fig:conv}
\end{figure}

\begin{table}[h!]
{
\begin{tabular}{r|rrr}   
\hline 
$n$ & Mean-field & Monte Carlo & True \\ \hline
1 & 1.629 & 1.647 & 1.668 \\
2 & 1.629 & 1.651  & --- \\
5 & 1.629 & 1.661  & ---\\
25 & 1.629 & 1.633 & ---\\
50 & 1.629 & 1.630  & ---\\
100 & 1.629 & 1.629 & ---\\ \hline
\end{tabular}}
\caption{Reserves $V^{1,n}$ computed using the mean-field approximation and a naïve Monte Carlo method (with sample size $M = 40,000$), respectively, for $n=1,2,5,25,50,100$.}
\label{tbl:main}
\end{table}

{To further compare the performance of the mean-field approximation to the naïve Monte Carlo method, we include in Table~\ref{tbl:runtime} runtimes corresponding to the computations in Table~\ref{tbl:main}. The runtime measurements are derived from non-optimized code executed on a hardware configuration with an Intel(R) Core(TM) i7-8700 CPU (using one of six cores) as well as 2x16 GB DDR4 2400 MHz RAM. Proper code optimazation should lead to significant improvements for both the mean-field approximation and the Monte Carlo method, but perhaps to a lesser degree for the latter due to its inherent simplicity. In terms of runtimes, we see that the Monte Carlo method (using the large sample size of $M = 40,000$) remains competitive for $n = 5$ but already for $n = 25$ is much slower than the mean-field approximation.}

\begin{table}[t!]
{
\begin{tabular}{r|rrr}  
\hline
$n$ & Mean-field & Monte Carlo \\ \hline
1 & 1m34s & 30s  \\
2 & 1m34s & 45s \\
5 & 1m34s & 1m49s \\
25 & 1m34s & 10m10s \\
50 & 1m34s & 22m58s  \\
100 & 1m34s & 52m30s \\ \hline
\end{tabular}}
\caption{{Runtimes using the mean-field approximation and a naïve Monte Carlo method (with sample size $M = 40,000$), respectively, for $n=1,2,5,25,50,100$.}}
\label{tbl:runtime}
\end{table}

The deviations between the mean-field approximation and the Monte Carlo estimates {for moderate $n$} could indicate that the mean-field approximation is somewhat poor, that the Monte Carlo estimate has yet to fully converge, or both. Table~\ref{tbl:mc} contains statistics for $50$ repeated applications of the naïve Monte Carlo method with $n=2,5,25$. The quantiles and the standard deviations both indicate substantial variance in the Monte Carlo estimates, and for $n=5,25$ the mean-field approximation is contained in the empirical $90\,\%$ confidence intervals. {This confirms that $M = 40,000$ does not suffice to ensure the convergence of the Monte Carlo estimates. Therefore, we must conclude that the mean-field approximation constitutes a necessary, and rather efficient, alternative already for moderate $n$.}

\begin{table}[h!]
{
\begin{tabular}{r|rrrr}
\hline
$n$ & Second lowest & Average & Second highest & Standard deviation \\ \hline
2 & 1.632 & 1.651 &1.673 & 0.0122 \\
5 & 1.627 & 1.643 & 1.661 & 0.0087 \\
25 & 1.626 & 1.633 & 1.639 & 0.0035 \\ \hline
\end{tabular}}
\caption{Statistics for $50$ repeated applications of the naïve Monte Carlo method to estimate $V^{1,n}$ with $n=2,5,25$.}
\label{tbl:mc}
\end{table}

\subsection{Further findings} \label{sec:sim_findings}

The reserve in the one-individual model is about $2.38\,\%$ larger than the mean-field reserve. This is because in the one-individual model, the effect of health claims on the disability rate is calculated based only on the health claims history of a single individual, meaning that the (likely) occurrence of just one health claim causes the disability rate to spike quite violently upwards, confer also with Figure~\ref{fig:dis_cred_one}. Since this is not similarly counteracted by downwards movements, there is a larger probability of disability in the one-individual model compared to the mean-field approximation and consequently also a larger reserve.

The mean-field convergence implies that $\nu^n \to v$ as $n \to \infty$. This convergence is neatly illustrated in Figure~\ref{fig:dis_cred_conv}, which mirrors Figure~\ref{fig:dis_cred_one}, but has $n=5,25,100$ rather than $n = 1$. Substantial deviations in disability rate appear even for $n=25$.

\pagebreak

\begin{figure}[h!]
    \centering
    \includegraphics[width=0.75\textwidth]{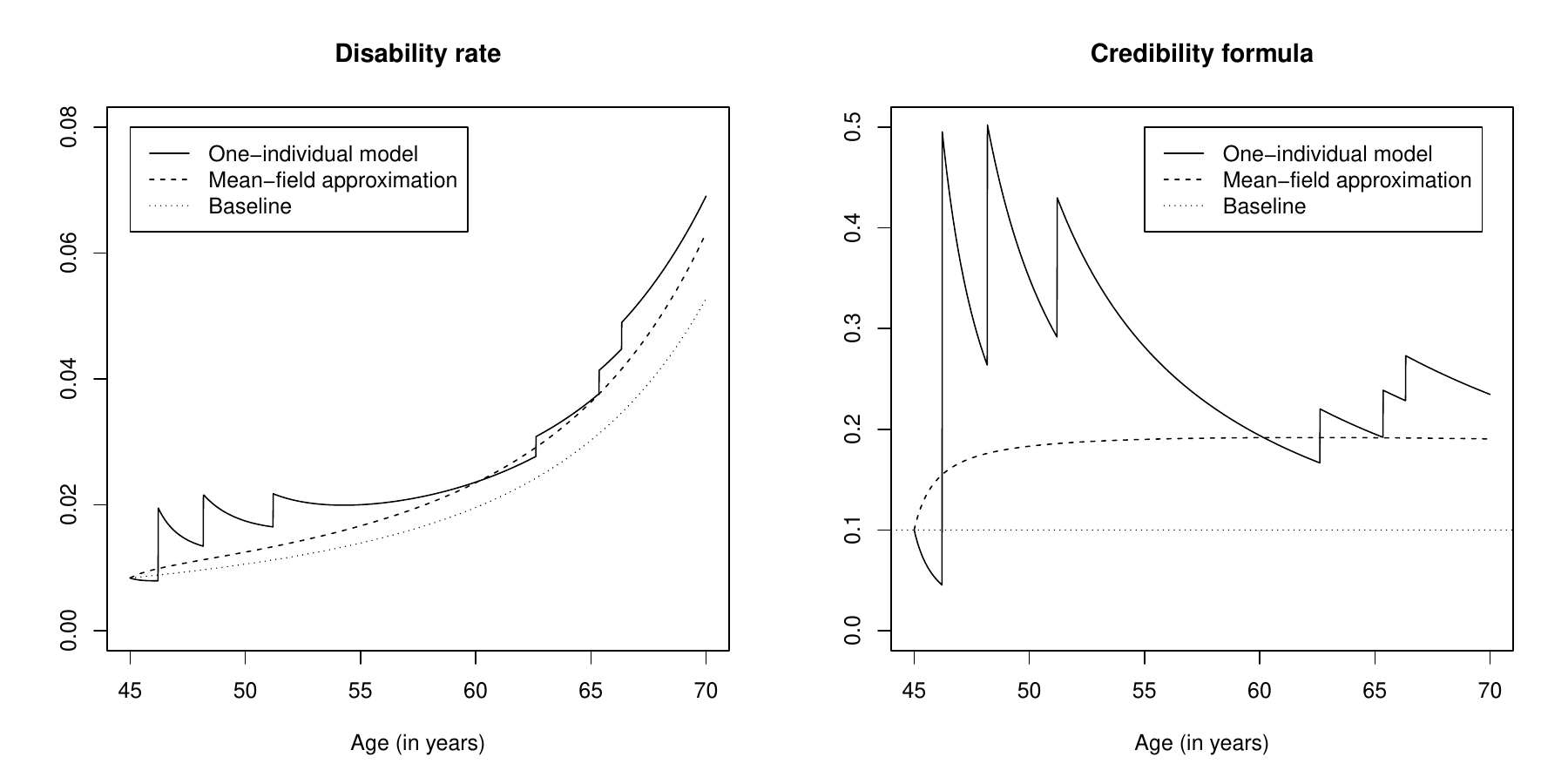}
    \caption{Disability rate (left) and credibility formula (right) for a single realization of the one-individual model ($y = \nu^1$) with the mean-field approximation ($y = v$) and the baseline ($y = \zeta_1$).} 
    \label{fig:dis_cred_one}
\end{figure}

\begin{figure}[h!]
    \centering
    \includegraphics[width=0.75\textwidth]{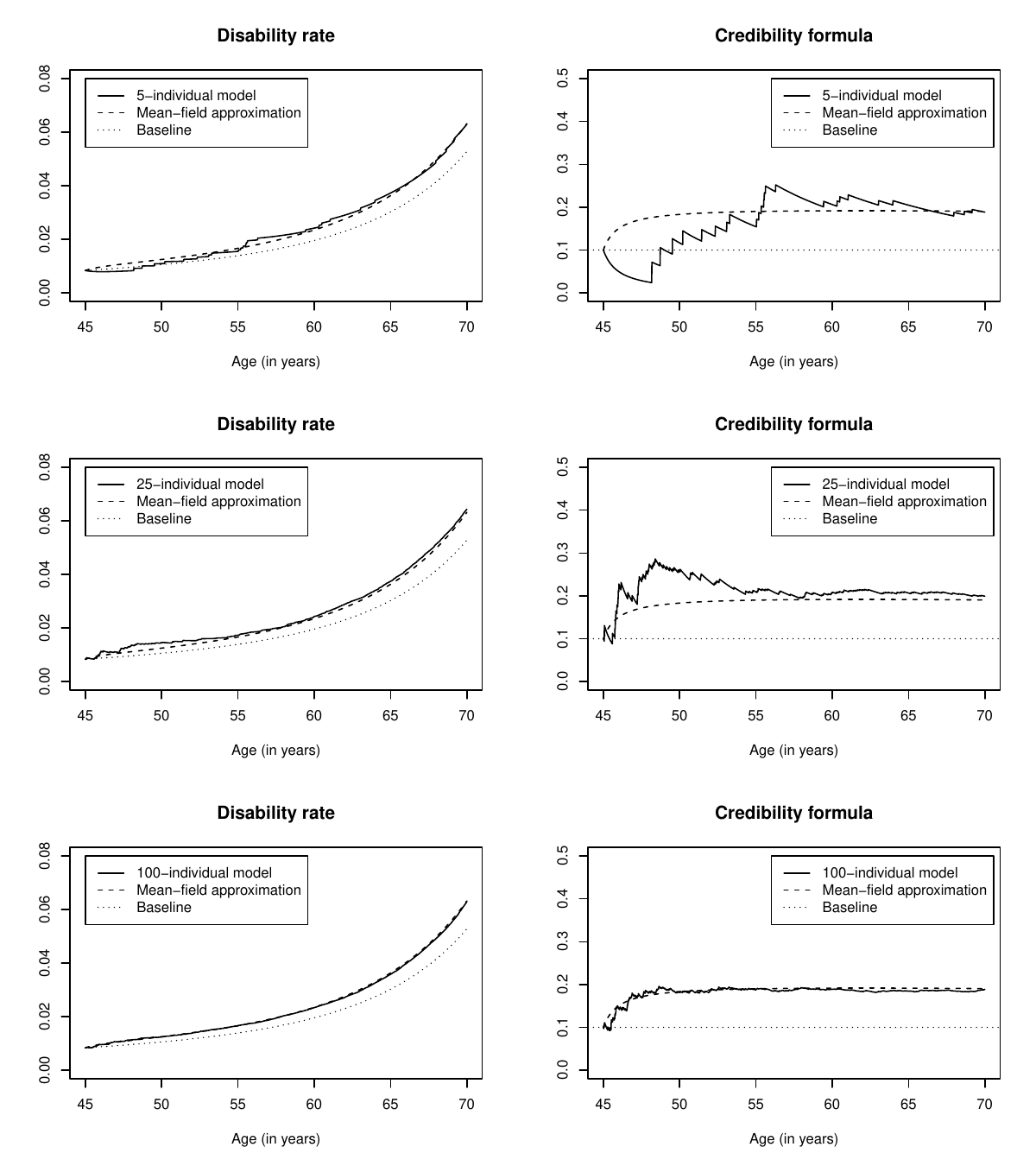}
    \caption{For $n=5,25,100$ disability rates (left) and credibility formulas (right) for a single realization of the $n$-individual model ($y = \nu^n$) with the mean-field approximation ($y = v$) and the baseline ($y = \zeta_1$).} 
    \label{fig:dis_cred_conv}
\end{figure}

\pagebreak

To assess the quality of the convergence, we may additionally study histograms of the average present values
\begin{align*}
\frac{1}{n}\sum_{\ell=1}^n \int_0^T e^{-\int_0^t r(s) \, \mathrm{d}s} B^{\ell,n,m}(\mathrm{d}t)
\end{align*}
for $m=1,\ldots,M$, where $M = 40,000$. The resulting histograms for $n=5,25,50,100$ can be found in Figure~\ref{fig:histograms}. For $n \geq 50$ the clear shape of a bell curve emerges, which would seem to indicate that besides the laws of large numbers already covered in Section~\ref{sec:mean_field}, a central limit theorem might also hold. {Proposition~6.7} in~\cite{Hornung2025} confirms exactly such a central limit theorem, but subject to a covariance condition that is difficult to verify theoretically. Figure~\ref{fig:histograms} offers empirical support for the conjecture that, in this specific model, the condition is met.

\begin{figure}[h!]
    \centering
    \includegraphics[width=0.95\textwidth]{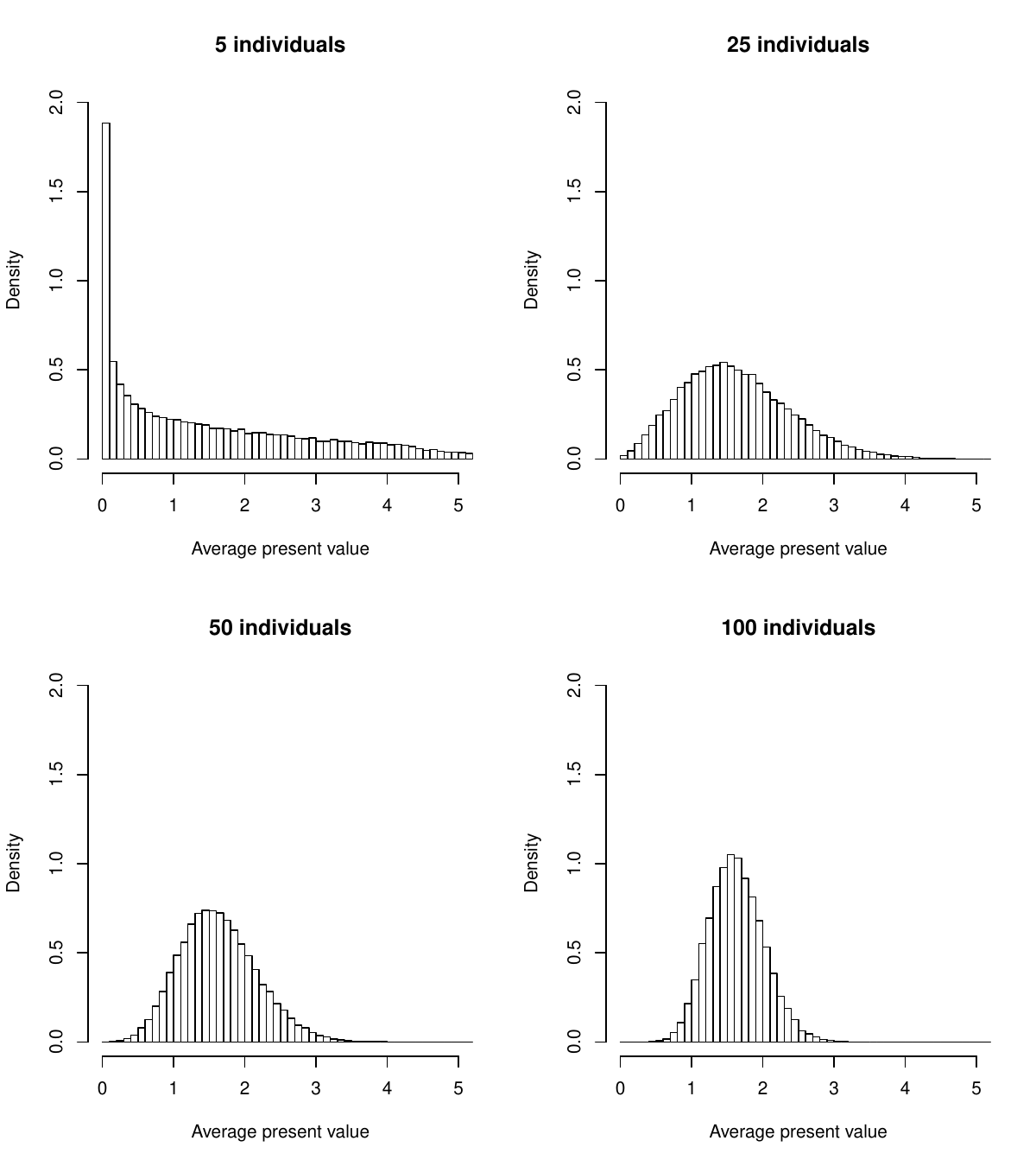}
    \caption{Histograms of average present values for $n=5,25,50,100$; as $n$ increases, a clear bell curve emerges.} 
    \label{fig:histograms}
\end{figure}

{

\subsection{Inclusion of lapse risk}\label{sec:lapse}

Throughout, we have made the assumption that the group of insured is closed -- that is, members do not leave and new members do not join. This is not realistic for the application we have in mind, where group members constitute employees and the group a specific employer. In this subsection, we study the impact of lapse risk on the calculation of the reserve; besides lapse risk affecting the individual, lapses may change the group composition and hereby the collective effect in the disability rate.

We suppose that lapses occur solely from state $1$, that is when the individual is active, and denote the rate of lapse by $t \mapsto \alpha(t)$. We consider three different ways of including lapse risk:
\begin{itemize}[leftmargin=*]
\item[] \textbf{Approach~A -- individually.} Lapse risk affects only the individual and not the collective: We first calculate $\bar{p}$ and $v$ excluding lapse risk, that is with $\mu_{1\bigcdot}$ unchanged, and then we repeat the calculation of $\bar{p}$ using this $v$ and including lapse risk, that is adding $\alpha$ to $\mu_{1\bigcdot}$ \\[-2mm]
\item[] \textbf{Approach~B -- collectively.} Lapse risk affects both the individual and the collective: We calculate $\bar{p}$ and $v$ including lapse risk, that is adding $\alpha$ to $\mu_{1\bigcdot}$ \\[-2mm]
\item[] \textbf{Approach~C -- collectively with adjustment.} Lapse risk affects both the individual and the collective, but with an adjustment to account for the effect it has on the group composition: We calculate $\bar{p}$ and $v$ including lapse risk, that is adding $\alpha$ to $\mu_{1\bigcdot}$, but we adjust the formula of $v$ according to
\begin{align*}
v(t) = \frac{\sum_j \sum_{h=0}^\infty \max\{h,\kappa_H\}\bar{p}_{1j}(t,t,h)}{\sum_j \sum_{h=0}^\infty \bar{p}_{1j}(t,t,h)}.
\end{align*}  
Note that -- since $\bar{p}$ has been calculated with $\alpha$ added solely to the diagonal element $\mu_{1\bigcdot}$ -- the denominator is less than one.
\end{itemize}
Recall that $v$ is, practically speaking, describing the expected number of health claims of a typical individual in the group. Introducing lapse risk, one might ask: What do we understand by `group' and does its composition change over time? If we do not adjust the formula of $v$, the one understands the group as composed of both lapsed individuals (with future health claims unrecorded) and individuals yet to lapse. If we adjust the formula of $v$ as in Approach~C above, one understands the group as solely composed of the individuals yet to lapse.

In Table~\ref{tbl:lapse}, we present mean-field approximations of the reserve for constant lapse rates of $0.02$, $0.05$, and $0.1$. We see, as one would expect, that the reserve decreases as the lapse rate increases. The inclusion of lapse risk produces similar results in Approach~A and~C, while Approach~B admits even lower reserves. This is because increasing the lapse rate not only decreases the likelihood that the individual utilizes their disability insurance, but also decreases the collective effect on the disability rate through the unadjusted $v$.

\begin{table}[h!]
{
\begin{tabular}{r|rrr}  
\hline
$\alpha$ & Individually & Collectively & Collectively with adjustment \\ \hline
0 \% & 1.629 & 1.629 & 1.629 \\
2 \% & 1.311 & 1.271 & 1.310 \\
5 \% & 0.974 & 0.916 & 0.973 \\
10 \% & 0.639 & 0.586 & 0.637\\ \hline
\end{tabular}}
\caption{{Reserves $V^{1,n}$ computed using the mean-field approximation and including lapse risk across Approach~A,~B, and~C.}}
\label{tbl:lapse}
\end{table}

To further illuminate the different approaches, we plot in Figure~\ref{fig:lapse} the disability rates and credibility formulas for a lapse rate of $0.05$. The observed effects are consistent with the results for the reserve; however, we see that although the reserves in Approach~A and~C are virtually indistinguishable, the underlying credibility formulas are indeed different. This is because the implicitly assumed group compositions are different, with there being a higher proportion of disabled and dead individuals in Approach~C compared to Approach~A; this is due to the fact that only active individuals may lapse. 

The most important conclusion to draw from the findings presented in this subsection is the following. There are indeed discrepancies in the reserves calculated across Approach~A,~B, and~C, and these discrepancies stem from differences in implicit assumptions regarding the composition of the group. Which approach is appropriate depends on how changes to the composition of the group have been taken into account in the estimation of the disability intensity. This is about properly adjusting $\nu^n$ in the presence of samling effects so it remains observable yet still related to the estimand of interest.

\begin{figure}[h!]
    \centering
    \includegraphics[width=0.95\textwidth]{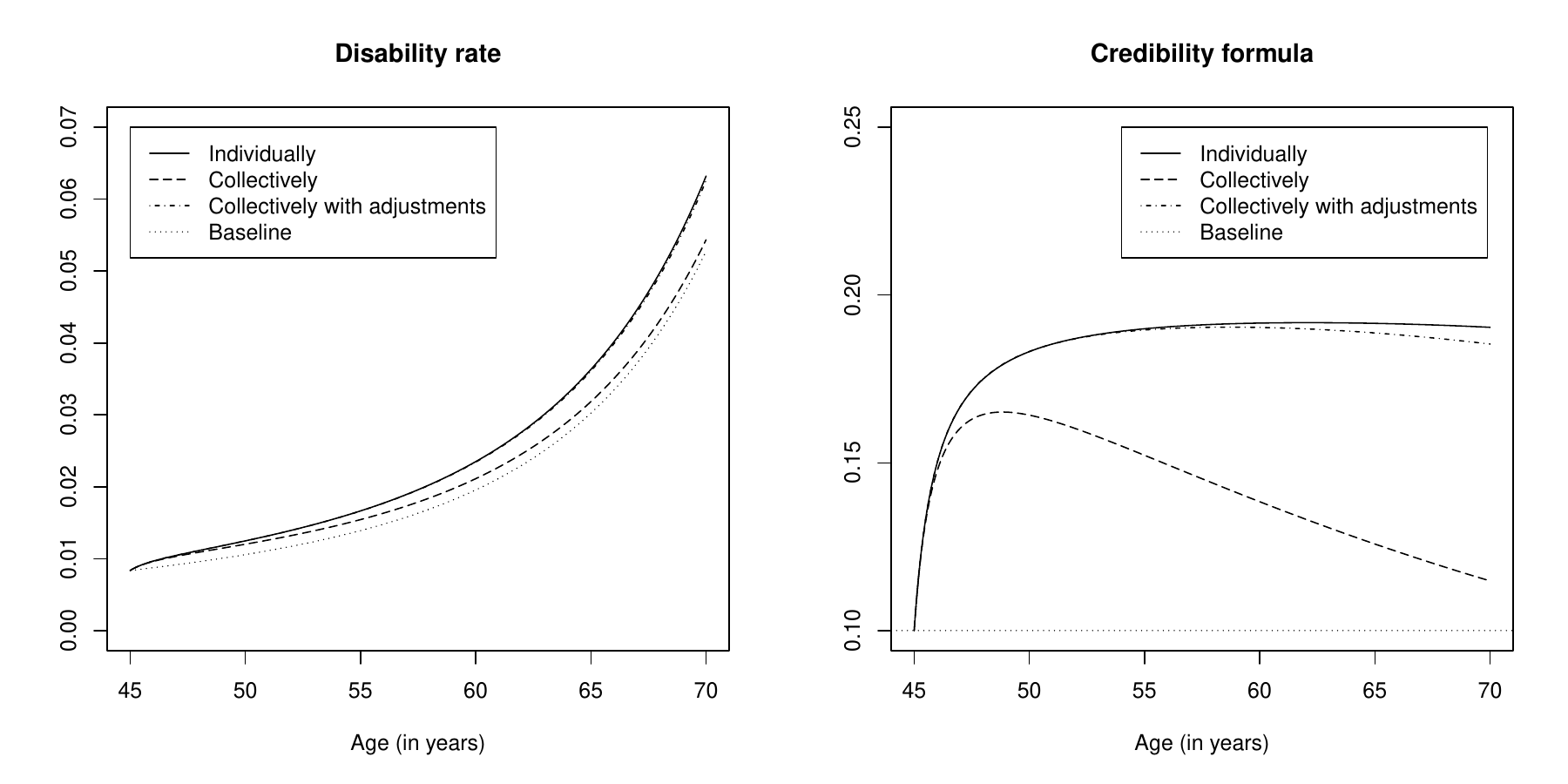}
    \caption{{Disability rate (left) and credibility formula (right) for a lapse rate of $0.05$ and across Approach~A,~B, and~C together with the baseline.}} 
    \label{fig:lapse}
\end{figure}

}

\section{Outlook}\label{sec:conclude}

{
In Subsection~\ref{sec:lapse}, we gave a practical perspective on how to account for lapse risk. The idea was to make plausible that the methods developed in this paper for closed groups may also be adapted to the situation where members leave and new members join. From a more theoretical perspective, to capture such policyholder behavior and sampling effects, one can expand the state-space to encompass group entries and exits. In principle, this is not at odds with the mean-field theory or the likelihood-based estimation, except that one -- as illustrated in Subsection~\ref{sec:lapse} -- may have to adjust $\nu^n$ and $v$ to the situation at hand. The detailed mathematical and statistical analysis of such adjustments, including alternative choices of $\nu^n$, constitute an interesting avenue for future work.}

In this paper, we focus on valuation at contract inception by firmly placing ourselves at initial time $t_0 = 0$ in the calculation of reserves, and we make the simplifying assumption that $U^{\ell,n}_0 = H^{\ell,n}_0 = 0$ and thus $\bar{U}_0 = \bar{H}_0 = 0$. To establish mean-field convergence, we only needed the assumption of $\pi$-chaosticity for $(Z^{1,n}_0,\ldots,Z^{n,n}_0)(\amsmathbb{P})$ for some $\pi$. If we instead place ourselves at initial time $t_0>0$, the values $U^{\ell,n}_{t_0}$, $H^{\ell,n}_{t_0}$, $\bar{U}_{t_0}$, and $\bar{H}_{t_0}$ are random. {If everything develops according to the model, one might like to initialize with the distribution of $(X^{1,n}_{t_0},\ldots,X^{n,n}_{t_0})$. It can be shown from Theorem~\ref{thm:mean-field_convergence} and properties of the total variance distance that the distribution of $(X^{1,n}_{t_0},\ldots,X^{n,n}_{t_0})$ is $\bar{X}_{t_0}(\amsmathbb{P})$-chaotic. Consequently, the distribution of $(X^{1,n}_{t_0},\ldots,X^{n,n}_{t_0})$ constitues a theoretically sound choice of initial distribution, and most of the results for $t_0 = 0$ thus also hold for $t_0 > 0$. However, it should be noted that it may be technically problematic to condition on null-sets -- such as a specific duration -- which poses a challenge for the state-wise quantities, confer also with the discussion between Corollary~5.5 and Theorem~5.6 in~\cite{Hornung2025}.  In practice, however, this can be resolved by instead conditioning on a sufficiently small duration interval.}

\section*{Acknowledgments}

Both authors have carried out this research in association with the project frame InterAct.

\section*{Disclosure statement}

The expressed opinions are attributable solely to us and do not necessarily reflect the views of any of our past, current, and future employers.

\bibliographystyle{amsplain}
\bibliography{references.bib}

@article{Aalen1987,
  author = {O.O. Aalen},
  journal = {Scandinavian Actuarial Journal},
  pages = {177-190},
  title = {{Dynamic modelling and causality}},
  doi = {10.1080/03461238.1987.10413826},
  volume = {1987},
  number = {3-4},
  year = {1987}}

@book{AndersenBorganGillKeiding1993,
  Author    = {P.K. Andersen and {\O}. Borgan and R.D. Gill and N. Keiding},
  Title     = {{Statistical Models Based on Counting Processes}},
  Publisher = {Springer, New York},
  Series   = {Springer Series in Statistics},
  doi	= {10.1007/978-1-4612-4348-9},
  Year      = {1993}}

@article{BuchardtMollerSchmidt2015,
  doi = {10.1080/03461238.2013.879919},
  journal = {Scandinavian Actuarial Journal},
  volume = {2015},
  number = {8},
  year = {2015},
  pages = {660-688},
  title = {{Cash flows and policyholder behaviour in the semi-Markov life insurance setup}},
  author = {K. Buchardt and T. M{\o}ller and K.B. Schmidt}}

@article{Christiansen2012,
author = {M.C. Christiansen},
title = {Multistate models in health insurance},
journal = {Advances in Statistical Analysis},
volume = {96},
pages = {155-186},
year  = {2012},
doi = {10.1007/s10182-012-0189-2}}

@article{DjehicheLoefdahl2016,
  author = {B. Djehiche and B. L{\"o}fdahl},
  title = {{Nonlinear reserving in life insurance: Aggregation and mean-field approximation}},
  journal = {Insurance: Mathematics and Economics},
  year = {2016},
  volume = {69},
  pages = {1-13},
  doi = {10.1016/j.insmatheco.2016.04.002}}

@article{FeinbergMandavaShiryaev2014,
  doi = {10.1016/j.jmaa.2013.09.043},
  journal = {Journal of Mathematical Analysis and Applications},
  volume = {411},
  year = {2014},
  pages = {261-270},
  title = {{On solutions of Kolmogorov's equations for nonhomogeneous jump Markov processes}},
  author = {E.A. Feinberg and M. Mandava and A.N. Shiryaev}}

@article{FeinbergMandavaShiryaev2022,
  doi = {10.1007/s10479-017-2538-8},
  journal = {Annals of Operations Research},
  volume = {317},
  year = {2022},
  pages = {587-604},
  title = {{Kolmogorov’s equations for jump Markov processes with unbounded jump rates}},
  author = {E. Feinberg and M. Mandava and A.N. Shiryaev}}

@article{Furrer2019,
	title={{Experience rating in the classic Markov chain life insurance setting: An empirical Bayes and multivariate frailty approach}},
	author={Furrer, C.},
	journal={European Actuarial Journal},
	volume={9},
	pages={31-58},
	doi={10.1007/s13385-019-00190-5},
	year={2019}}

@article{FurrerSoerensenYslas2025,
	title={{Bivariate phase-type distributions for experience rating in disability insurance}},
	author={C. Furrer and J.J. S{\o}rensen and J. Yslas},
	journal={European Actuarial Journal},
	pages={1-37},
	doi={10.1007/s13385-025-00439-2},
	year={2025},
	note={online first}}

@phdthesis{Helwich2008,
  author       = {M. Helwich},
  title        = {Durational effects and non-smooth semi-{M}arkov models in life insurance},
  school       = {University of Rostock},
  year         = {2008}}

@incollection{Hoem1972,
title = "Inhomogeneous {S}emi-{M}arkov {P}rocesses, {S}elect {A}ctuarial {T}ables, and {D}uration-{D}ependence in {D}emography",
editor = "T.N.E. Greville",
booktitle = "Population Dynamics",
publisher = "Academic Press",
pages = "251-296",
year = "1972",
doi = "10.1016/B978-1-4832-2868-6.50013-8",
author = "J.M. Hoem"}

@misc{Hornung2025,
  title={{Mean-field approximations in insurance}},
  author={Hornung, P.C.},
  note={arXiv:2511.04198},
  year={2025}}

@book{Jacobsen2006,
	author    = {M. Jacobsen},
	title     = {{Point process theory and applications: Marked point and piecewise deterministic processes}},
	publisher = {Birkh{\"a}user, Boston},
	series   = {Probability and its Applications},
	doi	= {10.1007/0-8176-4463-6},
	year      = {2006}}

@article{LewisShedler1979,
	title={{Simulation of nonhomogeneous Poisson processes by thinning}},
	author={Lewis, P.A.W. and Shedler, G.S.},
	journal={Naval Research Logistics Quarterly},
	volume={26},
	number={3},
	pages={403-413},
	year={1979}}

\end{document}